\DeclareMathAlphabet\mathbfcal{OMS}{cmsy}{b}{n}
\newcommand{\too}{\! \! \to \! \!}
\definecolor{darkgreen}{rgb}{0,.5,0}
\newcolumntype{R}{>{\raggedleft\arraybackslash}X}
\newcolumntype{C}{>{\centering\arraybackslash}X}
\newcolumntype{L}{>{\raggedright\arraybackslash}X}
\tikzset{dotted pattern/.style args={#1 and #2}{
   postaction=decorate,
   decoration={
    markings,
    mark=
    between positions 0 and 1 step #2
      with
      {
       \fill[radius=#1,\circlecolor] (0,0) circle;
      }
    }
  },
  dotted pattern/.default={1pt and 1.5mm},
}
\newcommand{\beq}{\begin{equation}}
\newcommand{\eeq}{\end{equation}}
\newcommand{\LOSR}[0]{\ifmmode\textup{\upshape LOSR}\else{\textup{\upshape LOSR}}\fi}
\newcommand{\DetLOSR}[0]{\ifmmode\textup{\upshape LDO}\else{\textup{\upshape LDO}}\fi}
\newcommand{\LDO}[0]{\ifmmode\textup{\upshape LDO}\else{\textup{\upshape LDO}}\fi}
\newcommand{\LSO}[0]{\ifmmode\textup{\upshape LSO}\else{\textup{\upshape LSO}}\fi}
\newcommand{\LDTNO}[0]{\ifmmode\textup{\upshape LDTNO}\else{\textup{\upshape LDTNO}}\fi}
\newcommand{\LO}[0]{\ifmmode\textup{\upshape LO}\else{\textup{\upshape LO}}\fi}
\newcommand{\LOCC}[0]{\ifmmode\textup{\upshape LOCC}\else{\textup{\upshape LOCC}}\fi}
\newcommand{\nhphantom}[1]{\sbox0{#1}\hspace{-\the\wd0}}
\newcommand*{\LOSRconv}{\xmapsto{\LOSR}}
\newcommand*{\LOSRinterconv}{\xleftrightarrow{\LOSR}}
\newtheorem{theo}{Theorem}
\newtheorem{thm}[theo]{Theorem}
\newtheorem{cor}{Corollary}
\newtheorem{defn}{Definition}
\theoremstyle{definition} 
\theoremstyle{plain}
\providecommand{\customgenericname}{}
\newcommand{\newcustomtheorem}[2]{%
  \newenvironment{#1}[1]
  {%
   \renewcommand\customgenericname{#2}%
   \renewcommand\theinnercustomgeneric{##1}%
   \innercustomgeneric
  }
  {\endinnercustomgeneric}
}
\begin{document}
\title{The type-independent resource theory of local operations and shared randomness}
\author{David Schmid}
\affiliation{Perimeter Institute for Theoretical Physics, 31 Caroline St. N, Waterloo, Ontario, N2L 2Y5, Canada}
\affiliation{Institute for Quantum Computing and Dept. of Physics and Astronomy, University of Waterloo, Waterloo, Ontario N2L 3G1, Canada}
\email{dschmid@perimeterinstitute.ca}
\author{Denis Rosset}
\affiliation{Perimeter Institute for Theoretical Physics, 31 Caroline St. N, Waterloo, Ontario, N2L 2Y5, Canada}
\author{Francesco Buscemi}
\affiliation{Graduate School of Informatics, Nagoya University, Chikusa-ku, 464-8601 Nagoya, Japan}
\begin{abstract}
In space-like separated experiments and other scenarios where multiple parties share a classical common cause but no cause-effect relations, quantum theory allows a variety of nonsignaling resources which are useful for distributed quantum information processing. These include quantum states, nonlocal boxes, steering assemblages, teleportages, channel steering assemblages, and so on. Such resources are often studied using nonlocal games, semiquantum games, entanglement-witnesses, teleportation experiments, and similar tasks.
We introduce a unifying framework which subsumes the full range of nonsignaling resources, as well as the games and experiments which probe them, into a common resource theory: that of local operations and shared randomness (LOSR). Crucially, we allow these LOSR operations to locally change the type of a resource, so that players can convert resources of {\em any} type into resources of any other type, and in particular into strategies for the specific type of game they are playing.
We then prove several theorems relating resources and games of different types. These theorems generalize a number of seminal results from the literature, and can be applied to lessen the assumptions needed to characterize the nonclassicality of  resources. As just one example, we prove that semiquantum games are able to perfectly characterize the LOSR nonclassicality of every resource of {\em any} type (not just quantum states, as was previously shown). As a consequence, we show that any resource can be characterized in a measurement-device-independent manner. \end{abstract}
\maketitle


\section{Introduction}
A key focus in quantum foundations is the study of nonclassicality.
Starting from the Einstein-Podolsky-Rosen paradox~\cite{Einstein1935}, special focus has been given to experiments involving space-like separated subsystems.
In the modern language of causality~\cite{Wood2015,Costa_2016,Allen2017,Lorenz2019}, the key feature of these scenarios is that the subsystems which are being probed share a classical common cause, but do not share any cause-effect channels between them.
In such scenarios, quantum theory allows for distributed quantum channels which act as valuable nonclassical resources for accomplishing tasks which would otherwise be impossible. 

The most common examples of such resources are entangled quantum states~\cite{horodecki2009quantum} and boxes producing nonlocal correlations~\cite{brunner2013Bell}; but there are many other types of useful resources.
We develop a resource-theoretic~\cite{resthry} framework which unifies a wide variety of these, including quantum states~\cite{Watrous}, boxes~\cite{brunner2013Bell}, steering assemblages~\cite{wisesteer,Cavalcanti2017}, channel steering assemblages~\cite{channelsteer}, teleportages~\cite{Hoban_2018,telep}, distributed measurements~\cite{Bennett}, measurement-device-independent steering channels~\cite{steer}, Bob-with-input steering channels~\cite{BobWI}, and generic no-signaling quantum channels~\cite{Watrous}.
Free (or classical) resources are those that can be generated freely by local operations and shared randomness (LOSR), encompassing the specific cases of separable quantum states, local boxes, unsteerable assemblages, and so on. Any resource which cannot be simulated by LOSR operations is said to be nonfree, or nonclassical. A resource is said to be at least as nonclassical as another resource if it can be transformed to the second using LOSR transformations. Crucially, such comparisons can be made for resources of arbitrary and potentially differing types.

Some works in the past have focused on LOSR as a resource theory {\em in specific scenarios}, such as for quantum states~\cite{sq,LOSRvsLOCCentang}, for nonlocal correlations~\cite{de2014nonlocality,gallego2016nonlocality,Bellquantified,LOSRvsLOCCentang}, and for steering assemblages~\cite{steer} (albeit under a different name). These previous works focused on one or two types of resources, and most commonly on quantum states. Our framework is more general, but subsumes each of these as a special case. 

In addition to introducing this encompassing framework, our second primary goal herein is to study how the type of a resource impacts the methods by which one can characterize its nonclassicality in practice.
For example, nonlocal boxes have classical inputs and outputs, and so only weak assumptions~\cite{Bell2004,Putz2014} about one's laboratory instruments are required for their characterization. 
However, when a resource has a quantum output, one requires a well-characterized quantum measurement to probe that output and consequently the resource~\cite{Rosset2012a}.
In such a case, the test of nonclassicality is said to be {\em device-dependent}, while in adversarial scenarios such as cryptography, the terminology of {\em trust} is also used~\cite{Pironio2016}.
The same idea applies to a quantum input, which must be probed using a well-characterized quantum state preparation device.
Thus, only nonlocal boxes can be probed in a {\em device-independent} manner; {\em a priori}, quantum states require well-characterized quantum measurement devices; while other objects, such as steering assemblages, require a mixture of both~\cite{Cavalcanti2015}.
Consequently, it is important to determine under what circumstances devices of one type may be converted into devices of a second type {\em in a manner that does not degrade their usefulness as a resource}.
If such a conversion is possible,
then one may be able to lessen the assumptions and technological requirements needed to characterize one's devices.

In some particular cases, previous work has studied this question of whether the nonclassicality of a quantum state can be characterized by first applying free operations which convert it to another type of resource.
For example, we know that some Werner states~\cite{Werner, Barrettlocal} have a local model for all measurements; such nonclassical states can only be transformed into classical boxes, and so all information about their nonclassicality is lost in the conversion. 
In contrast, the main result of Ref.~\cite{sq} proves that every entangled state can have its nonclassicality encoded in a semiquantum channel. Additionally, in Ref.~\cite{telep}, it is shown that every entangled state can generate a type of no-signaling channel (recently termed a teleportage~\cite{Hoban_2018}) which could not be generated by any separable state and which is useful for some task related to quantum teleportation~\cite{LipkaBartosik2019}.

It is useful to distinguish between qualitative versus quantitative characterizations of nonclassicality. 
To highlight the distinction, it is instructive to examine one particular line of research.
Ref.~\cite{sq} is often advertised as proving that the nonclassicality of every entangled state can be revealed in a generalization of nonlocal games termed {\em semiquantum games} (which were later used to construct {\em measurement-device-independent entanglement witnesses}~\cite{Branciard2013}).
However, this claim is actually a (qualitative) corollary of the (quantitative) main theorem, which showed that the performance of states in semiquantum games {\em exactly} reproduces the classification of entangled states under LOSR transformations.
Subsequent works~\cite{Branciard2013,Rosset2013a} focused on the qualitative distinction between classical and nonclassical resources, but still later works reinterpreted the payoffs of semiquantum games as measures of entanglement~\cite{Shahandeh2017,Rosset2018a}, thus reconnecting with the quantitative nature of Buscemi's original work.
Note also that the quantitative study of entanglement is historically linked to entanglement monotones~\cite{vidal2000entanglement}.
However, the study of nonclassicality cannot be reduced to a single such measure, as there are many inequivalent species of nonclassicality even in the simplest cases~\cite{Bellquantified}. Informed by the recent formalization of resource theories~\cite{resthry}, we study the fundamental mathematical object---the preorder of resources under LOSR transformations. One can then derive specific nonclassicality witnesses and monotones~\cite{rosset2019characterizing}, each of which provides an incomplete characterization of the preorder.

As implied just above, the mathematical structure which best allows for comparison between objects that need not be strictly ordered is a {\em preorder}.
Formally, a preorder is an ordering relation that is reflexive ($a \succeq a$) and transitive ($a \succeq b$ and $b \succeq c$ implies $a \succeq c$)\footnote{A preorder is distinguished from a partial order by the fact that $a \succeq b$ and $b \succeq a$ need not imply $a = b$. In a partial order, $a \succeq b$ and $b \succeq a$ implies $a = b$.}.
Our work focuses on three distinct preorders, which the reader should be careful to distinguish.
First, there is the preorder $R \succeq_{\rm LOSR} R'$ (sometimes denoted $R \LOSRconv R'$) that indicates if a resource $R$ can be converted into another resource $R'$ by LOSR transformations (Definition~\ref{LOSRorder}).
Second, there is the preorder $\succeq_{\rm type}$ over resource types that orders those types according to their ability to encode nonclassicality (Definition~\ref{typeorder}).
Finally, there is the preorder $\succeq_{\mathcal{G}_{\! T}}$ that ranks resources according to their performance with respect to the set $\mathcal{G}_{\! T}$ of all games of a particular type $T$ (Definition~\ref{orderwrtgame}).

This paper is best read alongside Ref.~\cite{rosset2019characterizing}.
In the current paper, we present a general framework to study quantum resources of arbitrary types, and we quantify the nonclassicality of these resources within a type-independent resource theory of local operations and shared randomness. Here, our main results center on showing how resources of one type can be more easily characterized by first converting them to resources of a second type. 
In Ref.~\cite{rosset2019characterizing}, our aim is practical and computational, focusing on how data can be used to characterize one's resources using off-the-shelf software. There, we include type-independent techniques for computing witnesses which can certify the nonclassicality of a resource, as well as techniques for computing the value of type-independent monotones (which we introduce therein).

\subsection{Organization of the paper}


In Section~\ref{sectypes}, we discuss various types of resources. We inventory the 9 possible types of a single party's partition of a resource, where that party's input and output may each be trivial, classical, or quantum. 
Focusing on the 81 bipartite resource types for simplicity, we recognize 10 types that have been studied in the literature and identify 5 new nontrivial resource types. All other bipartite resource types are either trivial or equivalent up to a symmetry.
We then define LOSR transformations between resources of arbitrary types, as well as the ordering over resources that this induces.

In Section~\ref{expressivity}, we define a precise sense in which some types can express the LOSR nonclassicality  of other types.
In many cases, conversions from a resource of one type to another type necessarily degrade the nonclassicality of the resource, as in Werner's example.
In other cases, one can perfectly encode the nonclassicality of any given resource into some resource of the target type, as in Buscemi's example. 
For every single-party type, we ask which can perfectly encode the nonclassicality of which others, and we answer this question for almost every pair, with the exception of one open question. From these considerations of single party types, one can deduce encodings of more complicated resource types which involve multiple parties.
Most strikingly, we show that semiquantum channels (with quantum inputs and classical outputs) are universal, in the sense that the nonclassicality of all resources can be encoded into them. 

In Section~\ref{unifiedgames}, we give an abstract framework for probing the nonclassicality of resources, subsuming as special cases the notions of nonlocal games~\cite{Bellreview}, semiquantum games~\cite{sq}, steering~\cite{wisesteer,steer} and teleportation~\cite{telep} experiments, and entanglement witnessing~\cite{Chruscinski2014}.
In our framework, every type of resource has a corresponding type of game, where a game of some type maps every resource of that type to a real number. (E.g., in nonlocal and semiquantum games, this number is the usual average game payoff).
We then show how resources of any type can be used to play a game designed for one specific type.
In some cases, games of one type can {\em completely} characterize the nonclassicality of every resource of another type. 
For example, Ref.~\cite{sq} showed that the LOSR nonclassicality of every quantum state is perfectly characterized by the set of semiquantum games.
We generalize these ideas by proving that if one type can encode another, then games of the first type can perfectly characterize the LOSR nonclassicality of all resources of the second type.
Together with our results on which types can encode which others, this expands the known methods for quantifying LOSR nonclassicality in practice and in theory. For example, our result on the universality of the semiquantum type implies that any resource of any type can be characterized by some semiquantum game, and hence can be characterized in a measurement-device-independent manner.

In Section~\ref{extending}, we relate our work to existing results.
First, we note how our results generalize the main result of Ref.~\cite{sq}, showing that semiquantum games can completely characterize the LOSR nonclassicality of arbitrary resource, not just of quantum states. 
Next, we show that the results of Ref.~\cite{steer} are a special case of two of our theorems when one applies steering experiments to quantify the nonclassicality of quantum states; further, our theorems provide a generalization of these arguments to more general experiments and types of resources.
Finally, we show that the LOSR nonclassicality of every quantum state is {\em completely} characterized by the set of teleportation games, and thus that the results of Ref.~\cite{telep} can be extended to be quantitative as well as qualitative. 

\section{Resource types and LOSR transformations between them} \label{sectypes}

We are interested in scenarios where the relevant parties share a classical common cause but do not share any cause-effect channels. For example, parties who perform experiments at space-like separation cannot access classical communication. 
For simplicity, we henceforth focus on bipartite scenarios; however, all of our results generalize immediately to arbitrarily many parties. We will consider only nonsignaling resources~\cite{Popescu1994,Barrett2005} throughout this work.\footnote{In fact, if one wishes to interpret resourcefulness as {\em nonclassicality}, then one must further restrict the enveloping theory to those resources which can be generated by local operations and quantum common causes. 
For non-signaling resources that {\em cannot} be realized in this manner~\cite{causallocaliz}, resourcefulness may originate in the nonclassicality of a common-cause process {\em or} in {\em classical} communication channels (which are fine-tuned so as to not exhibit signaling).}
We will not specifically consider post-quantum channels in this work, although one might naturally extend our work to include these as resources.
Hence, in this work a resource is a completely positive~\cite{NielsenAndChuang,Schmidcausal}, trace-preserving, nonsignaling quantum channel.
The parties may share various types of resources, which we now classify by type.

\subsection{Partition-types and global types}
In this paper, we use the term {\bf type} (of a resource) to refer exclusively to whether the various input and output systems are trivial ($\mathsf{I}$), classical ($\mathsf{C}$), or quantum ($\mathsf{Q}$). A system is said to be trivial if it has dimension one, is said to be classical if all operators on its Hilbert space are diagonal, and is otherwise said to be quantum.
(See Ref.~\cite{rosset2019characterizing} for more details.)
Additionally, if a resource has more than one input (output), which may be of different types, we imagine grouping them together, yielding an effective input (output) whose type is the least expressive type which embeds all those in the grouping, where quantum systems embed classical systems, which embed trivial systems.

We will denote the type of a single party's share of a resource by $T_i := X_i \too Y_i$, where $i$ labels the party and $X,Y \in \{\mathsf{I}, \mathsf{C}, \mathsf{Q} \}$, with $X$ labeling whether the input to that party is trivial ($\mathsf{I}$), classical ($\mathsf{C}$), or quantum ($\mathsf{Q}$) and $Y$ labeling the output similarly. We will refer to $T_i$ as the {\bf partition-type} of party $i$. 

We can then denote the {\bf global type} of an $n$-party resource as $T := T_1 T_2 ...T_n \simeq X_1X_2 ... X_n\too Y_1 Y_2...Y_n $. 
Note that while the specification of the global type of a resource fixes the number of parties and the types of their partitions of the resource, the specification of a partition-type does not constrain either the number of other parties who share the resource, nor the types of those other partitions. One could also consider partition-types for partitions of a resource which involve more than one party, but this paper makes use only of partition-types which involve a single party.

We now describe the ten examples of resource types from Fig.~\ref{types}, setting up some explicit terminology and conventions as we go. 
We graphically depict trivial, classical, and quantum systems by the lack of a wire, a single wire, and a double wire, respectively.

\begin{figure}[htb!]
\centering
\includegraphics[width=0.48\textwidth]{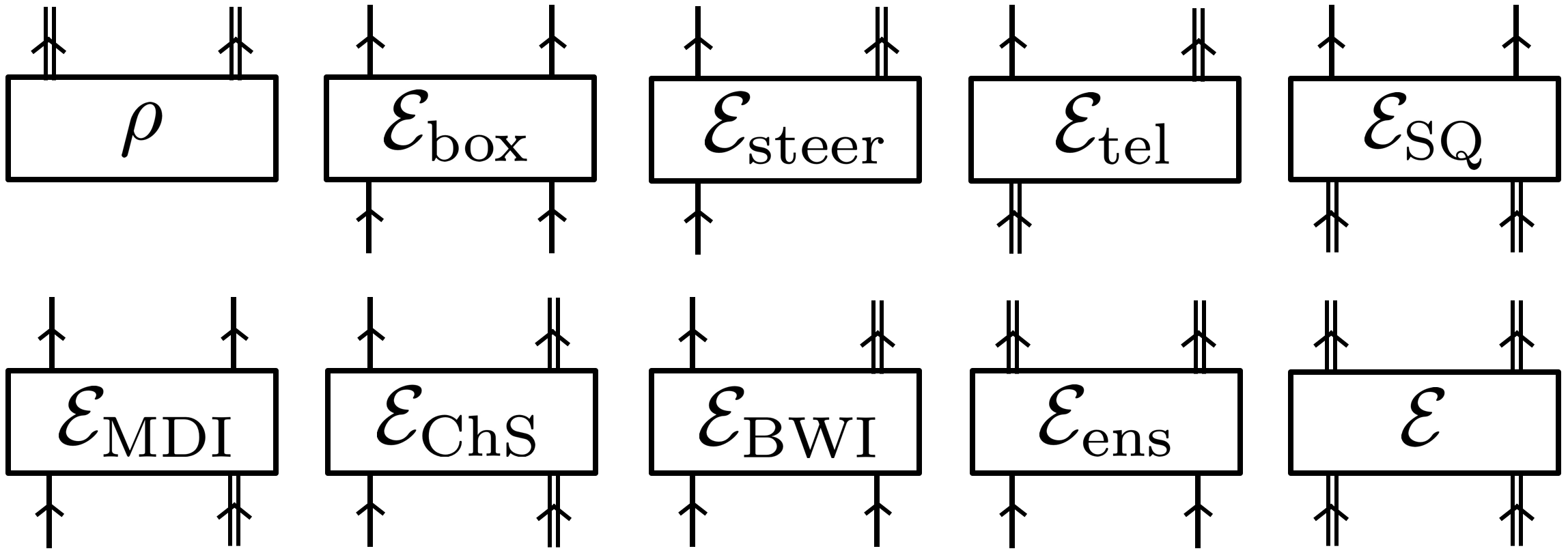}
\caption{Common types of no-signaling resources, where classical systems are represented by single wires and quantum systems are represented by double wires. (a) A quantum state $\rho$ has type $\mathsf{II} \too \mathsf{QQ}$. (b) A box $\mathcal{E}_{\rm box}$ has type $\mathsf{CC} \too \mathsf{CC}$. (c) A steering assemblage $\mathcal{E}_{\rm steer}$ has type $\mathsf{CI} \too \mathsf{CQ}$. (d) A teleportage $\mathcal{E}_{\rm tel}$ has type $\mathsf{QI} \too \mathsf{CQ}$. (e) A  semiquantum channel $\mathcal{E}_{\rm SQ}$ has type $\mathsf{QQ} \too \mathsf{CC}$. (f) A measurement-device-independent steering channel $\mathcal{E}_{\rm MDI}$ has type $\mathsf{CQ} \too \mathsf{CC}$. (g) A channel steering assemblage $\mathcal{E}_{\rm ChS}$ has type $\mathsf{CQ} \too \mathsf{CQ}$. (h) A Bob-with-input steering channel $\mathcal{E}_{\rm BWI}$ has type $\mathsf{CC} \too \mathsf{CQ}$. (i) An ensemble-preparing channel $\mathcal{E}_{\rm ens}$ has type $\mathsf{CC} \too \mathsf{QQ}$. (j) A quantum channel $\mathcal{E}$ has type $\mathsf{QQ} \too \mathsf{QQ}$. }\label{types}
\end{figure}

Fig.~\ref{types}(a) depicts a {\bf quantum state}, the canonical quantum resource. 
Bipartite quantum states have type $\mathsf{II} \too \mathsf{QQ}$; that is, they have no inputs and both outputs are quantum. 
The nonclassicality of quantum states is often quantified using the resource theory of local operations and classical communication (LOCC). While this is appropriate in some contexts, allowing classical communication for free is not appropriate in the context of space-like separated experiments, nor in any other scenario where distributed systems are unable to causally influence one another. In such cases, LOSR operations are the relevant ones for quantifying nonclassicality of any resource, including quantum states, and it is {\em LOSR-entanglement}, not LOCC-entanglement, that is relevant, as argued extensively in Ref.~\cite{LOSRvsLOCCentang}.

Fig.~\ref{types}(b) depicts another canonical type of resource~\cite{Barrett2005, brunner2013Bell}, often termed a correlation or a box-type resource, or {\bf box} for short. Bipartite boxes have type $\mathsf{CC} \too \mathsf{CC}$; that is, both parties have a classical input and a classical output. 
Extensive research has been done on boxes, e.g. to characterize the set of local boxes~\cite{brunner2013Bell} and the possible LOSR conversions between them~\cite{Horodecki2015, Bellquantified, Vicente2014}.
The fact that we wish to subsume boxes in our framework provides another reason to focus on LOSR as opposed to LOCC, since LOSR has been argued to be the appropriate set of free operations in this context~\cite{Bellquantified}
Furthermore, under unbounded LOCC {\em all} boxes would be deemed free, even nonlocal or signaling boxes.

Fig.~\ref{types}(c) depicts the type of resource that arises naturally in a steering scenario~\cite{Einstein1935, schrodinger_1935,wisesteer, Skrzypczyk2014, Gallego2015, Piani2015a, Cavalcanti2017, Uola2019}, often termed an {\bf assemblage}~\cite{Pusey2013}.
Such resources have type $\mathsf{CI} \too \mathsf{CQ}$; that is, the first party has a classical input and classical output, while the second party has no input and a quantum output. 

Fig.~\ref{types}(d) depicts a type of resource that arises naturally in a teleportation scenario~\cite{telep, Supic2018}, termed {\bf teleportages}~\cite{Hoban_2018}.
Such resources have type $\mathsf{QI} \too \mathsf{CQ}$.
Intuitively, given a teleportage, one would complete the standard teleportation protocol by applying one of a set of unitaries on the quantum output, conditioned on the classical output.
The precise operational sense in which these teleportages relate to the possibility of implementing an effective quantum channel is still being investigated~\cite{LipkaBartosik2019}\footnote{While LOSR is clearly the correct set of free operations for studying resources in Bell scenarios and other common cause scenarios, the same is not true for teleportation experiments, which might be better described by another resource theory (such as LOCC). The surprising insight which follows from Ref.~\cite{telep} is that a great deal can nonetheless be learned about teleportation scenarios by studying LOSR. }.

Fig.~\ref{types}(e) depicts the type of resource that arises naturally in semiquantum games, namely type $\mathsf{QQ} \too \mathsf{CC}$. We will term these  {\bf distributed measurements} or {\bf semiquantum channels}, since they arise in multiple contexts where one term~\cite{Bennett} or the other~\cite{sq} is more natural.

Fig.~\ref{types}(f) depicts the type of resource that arises naturally in measurement-device-independent (MDI) steering scenarios~\cite{steer}, namely type $\mathsf{CQ} \too \mathsf{CC}$. We will term these {\bf MDI-steering channels}.

Fig.~\ref{types}(g) depicts the type of resource that arises naturally in channel steering scenarios~\cite{channelsteer}, often termed a {\bf channel assemblage}. Such resources have type $\mathsf{CQ} \too \mathsf{CQ}$.

Fig.~\ref{types}(h) depicts the type of resource that arises when one generalizes a steering scenario to have a classical input on the steered party~\cite{BobWI}, termed a {\bf Bob-with-input steering channel}. Such resources have type $\mathsf{CC} \too \mathsf{CQ}$.

Fig.~\ref{types}(i) depicts a distributed classical-to-quantum channel, of type $\mathsf{CC} \too \mathsf{QQ}$. We will term these {\bf ensemble-preparing channels}. An interesting example of such a channel can be found in Ref.~\cite{causallocaliz} (see Eq.~82).

Fig.~\ref{types}(j) depicts a generic bipartite {\bf quantum channel}, of type $\mathsf{QQ} \too \mathsf{QQ}$.

This list is not exhaustive.
Even in the bipartite case, one might wonder how many nontrivial resource types there are, and whether all of these have been studied. First, note that the partition-type $\mathsf{I} \too \mathsf{I}$ corresponds to a trivial party. As there are no nonclassical resources involving only one party, all bipartite types involving partition-type $\mathsf{I}\too \mathsf{I}$ for either party are trivial. 
Two other partition-types, $\mathsf{C} \too \mathsf{I}$, and $\mathsf{Q} \too \mathsf{I}$, are also trivial, since the no-signaling principle guarantees that their input cannot affect the operation of the remaining parties~\cite{rosset2019characterizing}.
Moreover, some global types are equivalent up to exchange of parties, in which case we will consider only a single representative.
This leads us to our first open question. 
\begin{customopq}{1}
Even in the bipartite case, there are five nontrivial global types of resources that have not (to our knowledge) been previously studied, namely $\mathsf{QC} \too \mathsf{CQ}$, $\mathsf{CQ} \too \mathsf{QQ}$, $\mathsf{IQ} \too \mathsf{QQ}$, $ \mathsf{QQ} \too \mathsf{CQ}$, and $\mathsf{CI} \too \mathsf{QQ}$.
Do any of these correspond to scenarios which are interesting in their own right?
\end{customopq}
\noindent At the very least, each new type implies a novel form of `nonlocality'. What remains to be seen is whether these will be directly relevant for quantum information processing tasks. 

\subsection{Free  versus nonfree resources}

A nonsignaling resource (of any type) is {\bf free} with respect to LOSR, or {\bf classical}\footnote{In reference to the fact such resources can be generated by classical common causes. Classicality of a {\em resource} is not to be confused with classicality of input and output systems.}, if the parties can generate it freely using local operations and shared randomness. This notion of being free with respect to LOSR subsumes the established notions of classicality for every type of resource in Fig.~\ref{types}; e.g. for states it coincides with separability~\cite{horodecki2009quantum}, for boxes, it coincides with admitting of a local hidden variable model~\cite{brunner2013Bell}, for assemblages it coincides with unsteerability~\cite{Cavalcanti2017,Uola2019}, for teleportages it coincides with the inability to outperform classical teleportation~\cite{telep}, and so on, as pictured in Fig.~\eqref{freeset}.

\begin{figure}[htb!]
\centering
\includegraphics[width=0.48\textwidth]{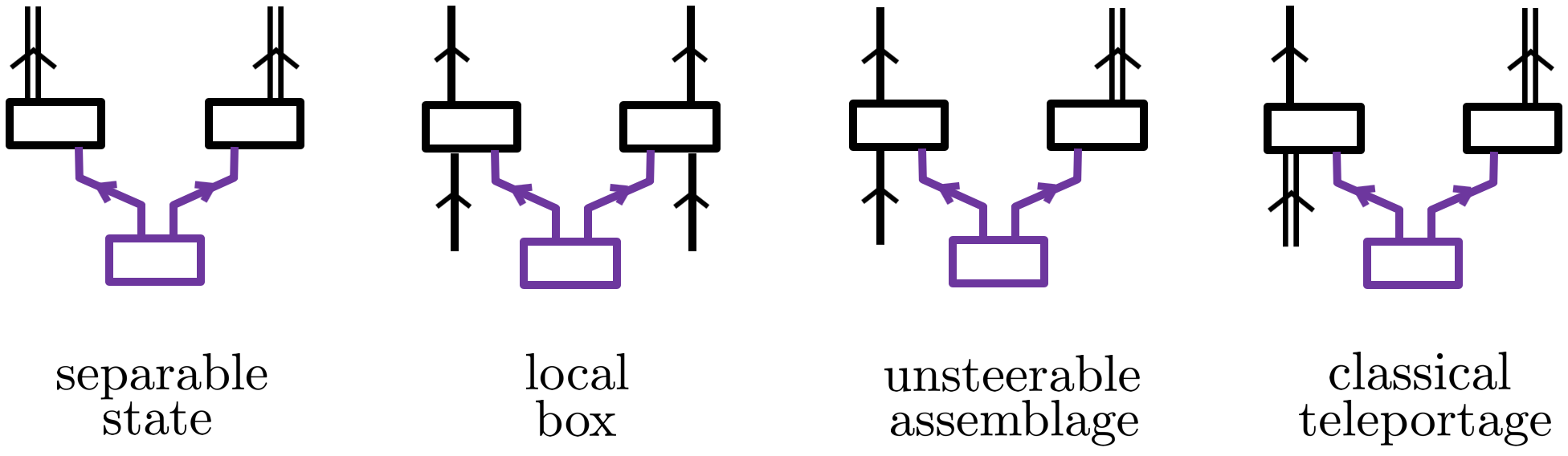}
\caption{Free LOSR resources are those which can be simulated by local operations (in black) and shared randomness (in purple). We depict four canonical types of free resources here: separable states, local boxes, unsteerable assemblages, and classical teleportages.}\label{freeset}
\end{figure}

Any resource which cannot be simulated by local operations and shared randomness is {\em non-free} and constitutes a resource of LOSR nonclassicality. 
The purpose of our type-independent resource theory of LOSR is to quantitatively characterize nonfree resources of arbitrary types, as we now do.

\subsection{Type-changing LOSR operations}

Two parties in an LOSR scenario transform resources using free LOSR operations. 
Most previous works which studied LOSR focused on conversions between specific types of resources; for example, Refs.~\cite{de2014nonlocality,gallego2016nonlocality,Bellquantified} considered LOSR conversions from boxes to boxes, Ref.~\cite{sq} considered LOSR conversions from quantum states to quantum states, and Ref.~\cite{steer} considered LOSR conversions\footnote{In this last case, the authors introduced the term local operations with steering and shared randomness (LOSSR); however, the operations they consider involve all and only the subset of LOSR operations from quantum states to assemblages, so there is no need for the new term LOSSR.} from quantum states to assemblages.
In keeping with our aim to unify a range of scenarios in one framework, and because local operations can freely change the type of a resource, we do {\em not} restrict attention to conversions among resources of fixed type, but rather allow conversions among resources of all types.

 We denote the set of all operations which can be generated by local operations and shared randomness by \LOSR. As depicted in Fig.~\ref{typechange}(a), the most general local operation on a given party is given by a comb~\cite{qcombs09}, and the different parties may correlate their choice of comb using their shared randomness. Note that this shared randomness can be transmitted down the side channel of each local comb, which implies that this depiction of LOSR is completely general and is convex~\cite{Bellquantified} for conversions from one fixed type to another. We will denote an element of this set by $\tau \in \LOSR$ and a generic resource of arbitrary type by $R$. 
 
 As in any resource theory~\cite{resthry}, the set of free operations induces a preorder over the set of all resources.  
Here, we write $R \LOSRconv R'$ whenever there exists some $\tau \in \LOSR$ such that $R'= \tau \circ R$, and we say that $R$ is {\bf at least as nonclassical} (as resourceful) as $R'$. We denote the ordering relation for the preorder defined by LOSR conversions as $\succeq_{\rm LOSR}$:
\begin{defn} \label{LOSRorder}
For resources $R$ and $R'$ of different and arbitrary type, we say that $R \succeq_{\rm LOSR} R'$ iff $R \LOSRconv R'$. 
\end{defn}
\noindent This definition allows us to make rigorous, quantitative comparisons of LOSR nonclassicality among resources of arbitrary types.
The relation $\succeq_{\rm LOSR}$ is a preorder, as there exists an identity LOSR transformation (reflexivity), and LOSR transformations compose (transitivity).

Two resources $R$ and $R'$ are equally nonclassical if they are interconvertible under LOSR; that is, if $R \LOSRconv R'$ and $R' \LOSRconv R$. We denote this $R \LOSRinterconv R'$, and we say that $R$ and $R'$ are in the same LOSR equivalence class.
 
 We give several examples of conversions among resource types in
Fig.~\ref{typechange}, depicting wires of unspecified (and arbitrary) type by dashed double lines. 
  
\begin{figure}[htb!]
\centering
\includegraphics[width=0.48\textwidth]{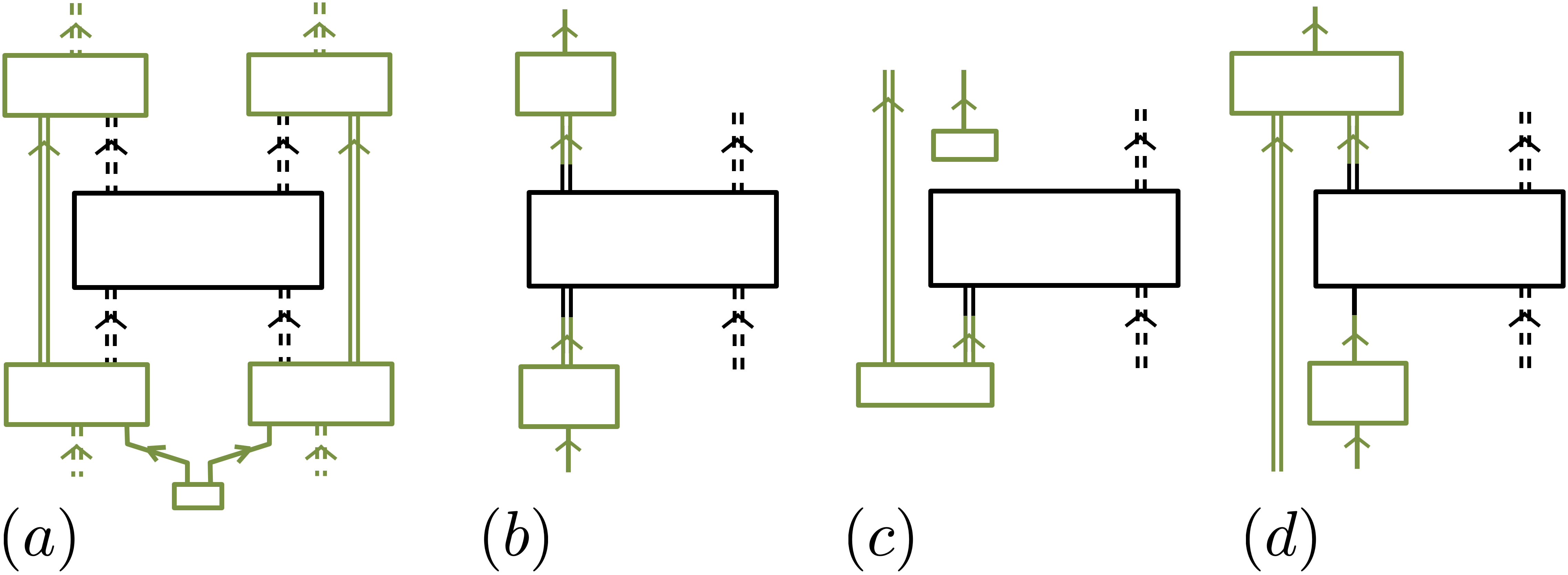}
\caption{ 
Some type-changing operations (in green), as described in the main text. Dashed wires denote systems of arbitrary and unspecified type.
(a) A generic bipartite type-changing LOSR transformation.
(b) A transformation taking partition-type $\mathsf{Q} \too \mathsf{Q}$ to $\mathsf{C} \too \mathsf{C}$.
(c) A transformation taking partition-type $\mathsf{Q} \too \mathsf{I}$ to $\mathsf{I} \too \mathsf{Q}$.
(d) A transformation taking partition-type $\mathsf{C} \too \mathsf{Q}$ to $\mathsf{Q} \too \mathsf{C}$.
} \label{typechange}
\end{figure}

Fig.~\ref{typechange}(a) depicts a generic bipartite type-changing LOSR operation.
Fig.~\ref{typechange}(b) depicts an example of a specific transformation which takes the left partition of the resource from $\mathsf{Q} \too \mathsf{Q}$ to $\mathsf{C} \too \mathsf{C}$. It is generated by composition with a local ensemble-preparing channel and a local measurement channel, respectively. 
Fig.~\ref{typechange}(c) depicts an example of a specific transformation which takes the left partition of the resource from $\mathsf{Q} \too \mathsf{I}$ to $\mathsf{I} \too \mathsf{Q}$. The transformation is generated by (sequential) composition with half of an entangled state and parallel composition with a classical system in some fixed state. In this example, the output system type is quantum, since it is comprised of a classical and quantum system.
Fig.~\ref{typechange}(d) depicts an example of a specific transformation which takes the left partition of the resource from $\mathsf{C} \too \mathsf{Q}$ to $\mathsf{Q} \too \mathsf{C}$, generated by a stochastic transformation on the classical input to the resource and performing a joint quantum measurement channel on the quantum output of the resource together with some new quantum input. 

\section{Encoding nonclassicality of one type of resource in another type} \label{expressivity}

We now consider a preorder over {\em types of resources} (rather than over the resources themselves). This allows us to formally compare the different manifestations of nonclassicality. For example, this preorder provides a {\em formal} sense in which entanglement and nonlocality are incomparable types of nonclassicality. Surprisingly, we will also show that not all types of nonclassicality are incomparable.

\begin{defn}
Global type $T$ {\bf encodes the nonclassicality of} global type $T'$, denoted $T \succeq_{\rm type} T'$, if for every resource $R'$ of type $T'$, there exists at least one resource $R$ of type $T$ such that $R' \LOSRinterconv R $.
\end{defn}
In other words, there exists some resource of the higher type in every equivalence class of resources of the lower type. Several well-known examples of such encodings will be given shortly.

To study the preorder over global types, it is also useful to consider a preorder over partition-types; that is, over the nine possible types $T_i := X_i \too Y_i$ of a single party's share of a resource.
Considering without loss of generality the first party, denoted by subscript $1$, we say that type $T_1$ is higher in the preorder than type $T_1'$ if for every resource of type $T_1'  T_2 ...  T_n $, there exists a resource of type $T_1  T_2 ...  T_n $ which is in the same LOSR equivalence class (for all numbers of parties $n$). 
Equivalently, this means that the LOSR equivalence class of any resource with partition-type $T_1'$ on the first party always contains at least one resource of partition-type $T_1$ (on the first party).
We denote this second ordering relation $\succeq_{\rm type}$:
\begin{defn} \label{typeorder}
We say that $T_1 \succeq_{\rm type} T_1'$ iff for all $R'$ of type $T_1'  T_2 ...  T_n $ (as one ranges over all $T_2, ..., T_n$ and all $n$), there exists $R$ of type $T_1  T_2 ...  T_n $ in the LOSR equivalence class of $R'$, that is, satisfying $R' \LOSRinterconv R $. 
\end{defn}
\noindent   In such cases, we say that partition-type $T_1$  encodes (the nonclassicality of) all resources of partition-type $T_1'$, or more simply that type $T_1$ encodes type $T_1'$.

If every partition-type of some given global type is higher than the corresponding partition-type of a second global type on every partition, then the first type is necessarily higher in the preorder over global types. Hence, orderings over global types can often be deduced from orderings over partition-types.

As a trivial example, it is clear that the global type $\mathsf{QQ} \too \mathsf{QQ}$ (that of bipartite quantum channels) is above every other bipartite type. For example, it is above the global type $\mathsf{II} \too \mathsf{QQ}$ (that of bipartite quantum states) in the preorder, so that $\mathsf{QQ} \too \mathsf{QQ} \succeq_{\rm type} \mathsf{II} \too \mathsf{QQ}$,  since the former is an instance of the latter where the inputs to the channel are trivial. 
In other words: given any bipartite quantum state, there is a bipartite quantum channel which is in the same LOSR equivalence class---namely, the quantum state itself, viewed as a channel from the trivial system to a quantum system on each partition. We will refer to such trivial instances of ordering among types as {\bf embeddings} of one type into the other.

Two resource types are in the same equivalence class over types if any resource of either type can be converted into a resource of the other type which is in the same LOSR equivalence class.
For example, the three partition-types $\mathsf{I}\too \mathsf{I}$, $\mathsf{C}\too \mathsf{I}$, and $\mathsf{Q}\too \mathsf{I}$ are all in the lowest equivalence class over partition-types, since (as discussed above) they never play any role in the nonclassicality of any nonsignaling resource.

Understanding the scope of nonclassicality-preserving conversions between resources of different global types is particularly useful for devising experimental measures and witnesses of nonclassicality, as we discuss in Section~\ref{implictypetoperf} (and in Ref.~\cite{rosset2019characterizing}).
Abstractly, this is because one type is above another type if there exists an embedding of the partial order over equivalence classes of resources of the lower type into the partial order of the higher type. 
When this is the case, techniques for characterizing the preorder of the higher type give direct information about the preorder of the lower type. 

\subsection{Determining which types encode the nonclassicality of which others} \label{deterencode}

In this section, we derive all but two of the ordering relations that hold between the possible pairings of partition-types by leveraging various results from the literature. These results are summarized in Table~\ref{typeorderings}. As discussed above, orderings over global types can be deduced from these.
 
\begin{table}[htb!]
\centering
\includegraphics[width=0.49\textwidth]{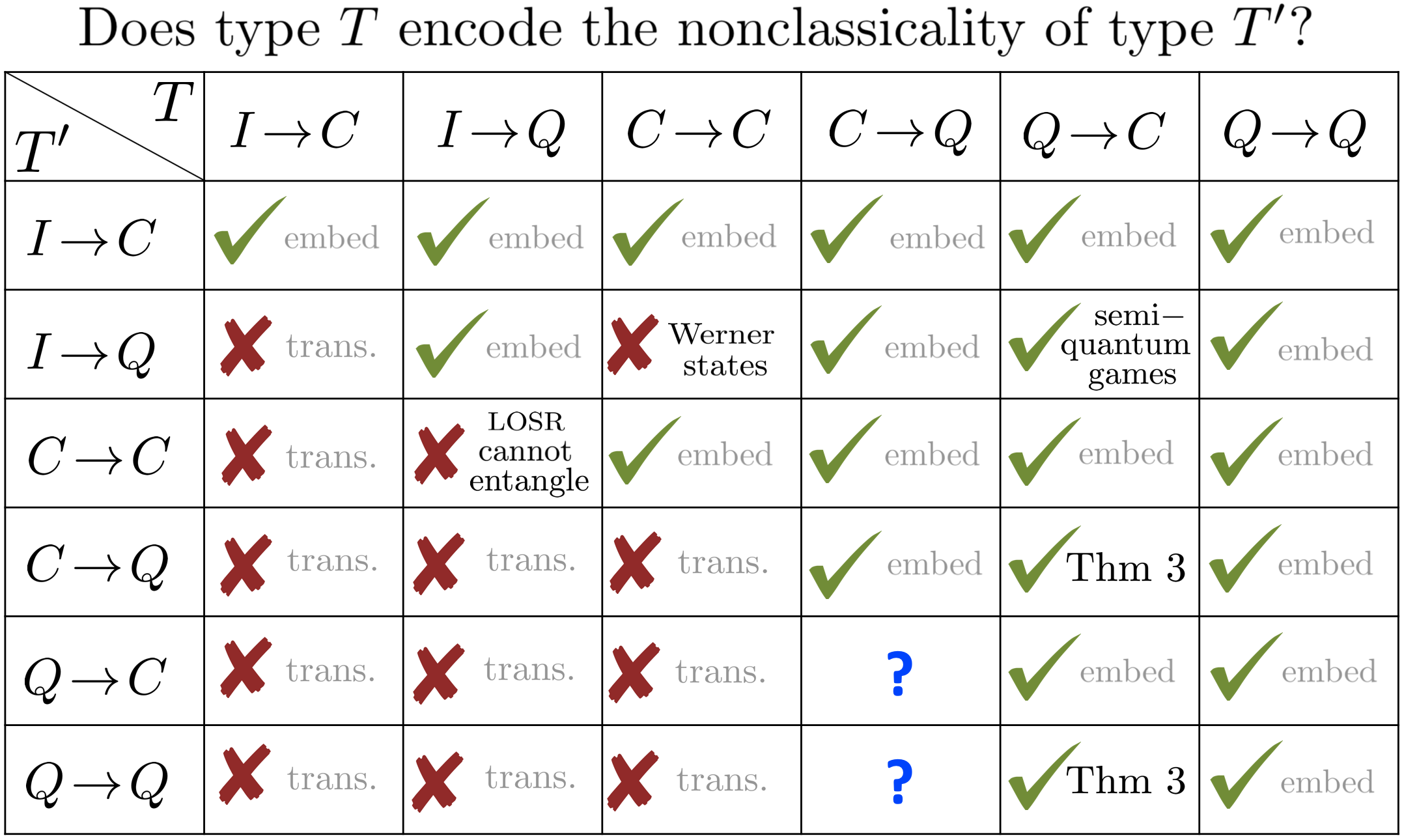}
\caption{
A green check mark in a given cell indicates that the column type $T$ is higher in the order over partition-types than the row type $T'$ (denoted $T \succeq_{\rm type} T'$), while a red cross indicates that it is not higher (denoted $T \not\succeq_{\rm type} T'$. The text in each cell alludes to the proof (given in the main text) of that ordering relation. Two relations are unknown, as indicated by blue question marks.
} \label{typeorderings}
\end{table} 
As discussed above, there are no nonfree resources which nontrivially involve the types $\mathsf{I} \too \mathsf{I}$, $\mathsf{C}    \too \mathsf{I}$, or $\mathsf{Q} \too \mathsf{I}$, so we need not discuss them further. There remain 6 nontrivial types, and hence 36 ordering relations to check.
These are all shown in the table. If the column resource type $T$  is higher in the order than the row type $T'$, so that $T \succeq_{\rm type}T'$, then we indicate this with a green check mark in the corresponding cell in the table. If instead $T \not\succeq_{\rm type}T'$, we indicate this with a red cross. In each case, we briefly allude to the logic behind the proofs for that particular ordering---proofs which we now give.

As stated in Section~\ref{expressivity}, a type is higher in the order than all types which it embeds, where quantum systems embed classical systems, which embed trivial systems. In the table, we indicate these trivial ordering relations by the word `embed'.

Next, recall that Werner proved the existence of entangled states which cannot violate any Bell inequality involving projective measurements~\cite{Werner}. It was subsequently proved that this holds true even for arbitrary local measurements~\cite{Barrettlocal}, a result that holds even if the choice of local measurements are made in a correlated fashion using shared randomness. This constitutes the most general LOSR conversion scheme from quantum states to boxes. In other words, an entangled Werner state cannot be converted into {\em any} nonfree box, much less into a box that is in its LOSR equivalence class (as would be required for encoding its nonclassicality into a box-type resource). It follows that global type $\mathsf{CC} \too \mathsf{CC}$ is not above global type $\mathsf{II} \too \mathsf{QQ}$, which in turn implies that partition-type $\mathsf{C} \too \mathsf{C}$ is not above partition-type $\mathsf{I} \too \mathsf{Q}$. That is, $\mathsf{C} \too \mathsf{C} \not\succeq_{\rm type} \mathsf{I} \too \mathsf{Q}$, as is indicated in the table by the phrase `Werner states'. 

In addition, it is well known that LOCC can generate arbitrary boxes and yet cannot generate any entangled state. Since LOSR operations form a subset of LOCC operations, this implies that LOSR operations applied to any box (of type $\mathsf{CC} \too \mathsf{CC}$) cannot generate {\em any} nonfree state (of type $\mathsf{II} \too \mathsf{QQ}$), much less a state in its LOSR equivalence class. Hence, global type $\mathsf{II} \too \mathsf{QQ}$ is not above global type $\mathsf{CC} \too \mathsf{CC}$, which in turn implies that partition-type $\mathsf{I} \too \mathsf{Q}$ is not above partition-type $\mathsf{C} \too \mathsf{C}$. That is, $\mathsf{I} \too \mathsf{Q} \not\succeq_{\rm type} \mathsf{C} \too \mathsf{C}$, as is indicated in the table by the phrase `LOSR cannot entangle'.

We can use transitivity of the ordering relation to prove that $\mathsf{I} \too \mathsf{C}$ is not above $\mathsf{I} \too \mathsf{Q}$ and is not above $\mathsf{C} \too \mathsf{C}$, and that none of $\mathsf{I} \too \mathsf{C}$, $\mathsf{I} \too \mathsf{Q}$, or $\mathsf{C} \too \mathsf{C}$ are above any of $\mathsf{C} \too \mathsf{Q}$, $\mathsf{Q} \too \mathsf{C}$, and $\mathsf{Q} \too \mathsf{Q}$. For example, from the fact that $\mathsf{C} \too \mathsf{C}$ is above $\mathsf{I} \too \mathsf{C}$ and the fact that $\mathsf{C} \too \mathsf{C}$ is not above $\mathsf{I} \too \mathsf{Q}$, it must be that $\mathsf{I} \too \mathsf{C}$ is not above $\mathsf{I} \too \mathsf{Q}$. If it were otherwise, one would have $\mathsf{C} \too \mathsf{C}$ above $\mathsf{I} \too \mathsf{C}$ above $\mathsf{I} \too \mathsf{Q}$ $\implies$ $\mathsf{C} \too \mathsf{C}$ above $\mathsf{I} \too \mathsf{Q}$, which is false. The other transitivity arguments run analogously. In the table, we indicate all such ordering relations by the abbreviation `trans.'.

One of the authors proved in Ref.~\cite{sq} that there exists some semiquantum channel (of type $\mathsf{QQ} \too \mathsf{CC}$) in the same equivalence class as any given quantum state (of type $\mathsf{II} \too \mathsf{QQ}$). A slight reframing of this result implies that the semiquantum partition-type $\mathsf{Q} \too \mathsf{C}$ is higher in the order than $\mathsf{I} \too \mathsf{Q}$, as we show below. That is, $\mathsf{Q} \too \mathsf{C} \succeq_{\rm type} \mathsf{I} \too \mathsf{Q}$, as is indicated in the table by the phrase `semiquantum games'.

Finally, as we prove in Theorem~\ref{squniv}, the semiquantum partition-type $\mathsf{Q} \too \mathsf{C}$ is higher in the order than all other partition-types. The ordering relations that follow from our proof but not from previous work, namely $\mathsf{Q} \too \mathsf{C} \succeq_{\rm type} \mathsf{C} \too \mathsf{Q}$ and $\mathsf{Q} \too \mathsf{C} \succeq_{\rm type} \mathsf{Q} \too \mathsf{Q}$, are indicated in the table by the phrase `Thm~3'.

This proves all the results shown in the table. There remain two unknown ordering relations, indicated in the table by question marks; namely whether $\mathsf{C} \too \mathsf{Q}$ is higher in the order than either $\mathsf{Q} \too \mathsf{C}$ or $\mathsf{Q} \too \mathsf{Q}$. Because $\mathsf{Q} \too \mathsf{C}$ and $\mathsf{Q} \too \mathsf{Q}$ are in the same equivalence class (at the top of the order), the answer to both of these questions must be the same; that is, either $\mathsf{C} \too \mathsf{Q}$ encodes them both, or it encodes neither.
Such an encoding could have dramatic practical consequences. 
For example, if the encoding can be done with a fixed transformation (which is not a function of the resource to be converted), then this would enable the possibility of {\em preparation-device-independent} quantification of nonclassicality.
\begin{customopq}{2}
Can the LOSR nonclassicality of any resource be perfectly characterized in a {\em preparation-device-independent} manner? 
\end{customopq}

\subsection{Semiquantum channels are universal encoders of nonclassicality} \label{sec:sq}

To complete the arguments of the last section, we prove that the semiquantum partition-type can encode any other partition-type. The consequences of this fact are fleshed out further in Section~\ref{implictypetoperf}.
\begin{thm} \label{squniv}
The semiquantum partition-type $\mathsf{Q} \too \mathsf{C}$ is in the unique equivalence class at the top of the order over partition-types. That is, it can encode the nonclassicality of all other partition-types.
\end{thm}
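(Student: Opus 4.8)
The plan is to show that $\mathsf{Q} \too \mathsf{C}$ is above every nontrivial partition-type in $\succeq_{\rm type}$, i.e.\ that $\mathsf{Q} \too \mathsf{C} \succeq_{\rm type} X \too Y$ for all $X,Y \in \{\mathsf{I},\mathsf{C},\mathsf{Q}\}$. Several of these relations are already in hand: $\mathsf{Q} \too \mathsf{C}$ is above $\mathsf{I} \too \mathsf{C}$ and $\mathsf{C} \too \mathsf{C}$ by embedding (a quantum input specializes to a trivial or classical one while the output stays classical), and $\mathsf{Q} \too \mathsf{C} \succeq_{\rm type} \mathsf{I} \too \mathsf{Q}$ is Buscemi's result cited above. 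The genuinely new content is therefore $\mathsf{Q} \too \mathsf{C} \succeq_{\rm type} \mathsf{Q} \too \mathsf{Q}$ and $\mathsf{Q} \too \mathsf{C} \succeq_{\rm type} \mathsf{C} \too \mathsf{Q}$. For each target type $X \too Y$ and each resource $R'$ of type $(X \too Y)\,T_2\cdots T_n$, I would exhibit two LOSR operations acting only on the first party---a forward map $\Phi_1$ converting partition-type $X \too Y$ to $\mathsf{Q} \too \mathsf{C}$ and a reverse map $\Psi_1$ converting $\mathsf{Q} \too \mathsf{C}$ back to $X \too Y$---such that $\Psi_1 \circ \Phi_1$ acts as the identity on $R'$. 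Setting $R := \Phi_1 \circ R'$, which is of type $\mathsf{Q} \too \mathsf{C}$ on the first party, then gives $R' \LOSRconv R$ and $R \LOSRconv R'$, hence $R' \LOSRinterconv R$, which is exactly the encoding required by Definition~\ref{typeorder}. Because $\Phi_1,\Psi_1$ touch only the first party, the construction is insensitive to $T_2,\dots,T_n$ and $n$.

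The heart of the argument---and the main obstacle---is making the conversion of a \emph{quantum output} into a \emph{classical output} reversible; a naive measurement of the output would irretrievably destroy information. The key lemma I would prove is that teleportation resolves this, using the quantum \emph{input} of the semiquantum party to inject a teleportation ancilla. Concretely, writing $\mathcal{A}_{\rm out}$ for the quantum output of $R'$, the forward map $\Phi_1$ enlarges the first party's quantum input by a register $\mathcal{M} \cong \mathcal{A}_{\rm out}$, runs $R'$, performs a Bell measurement on $\mathcal{A}_{\rm out}\otimes\mathcal{M}$, and outputs the (classical) Bell outcome $k$; the result has a quantum input and a classical output, i.e.\ partition-type $\mathsf{Q} \too \mathsf{C}$. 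The reverse map $\Psi_1$ locally prepares a maximally entangled state $\ket{\Omega}_{\mathcal{M}\mathcal{M}'}$, feeds the $\mathcal{M}$ half into $R$, reads off $k$, and applies the Pauli correction $P_k$ to the retained half $\mathcal{M}'$, which it outputs as the quantum output. By the standard teleportation identity, $P_k$ applied to $\mathcal{M}'$ reproduces the state that was present in $\mathcal{A}_{\rm out}$, so the round trip restores $R'$'s quantum output exactly. All of these steps---entangled-state preparation, Bell measurement, and conditional Pauli correction---are local quantum operations on the first party and hence lie in \LOSR.

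The input side is handled independently and is comparatively routine. If the input of $R'$ is quantum ($X = \mathsf{Q}$), $\Phi_1$ routes the corresponding part of the quantum input coherently into $R'$ and $\Psi_1$ feeds the desired input state through unchanged. If it is classical ($X=\mathsf{C}$), $\Phi_1$ measures the relevant part of the quantum input in the computational basis $\{\ket{x}\}$ to obtain the classical label $x$ for $R'$, while $\Psi_1$ prepares the basis state $\ket{x}$; the round trip $x \mapsto \ket{x} \mapsto x$ is faithful because basis states are measured deterministically, so no reversibility is lost even though the measurement is irreversible on general inputs. If it is trivial ($X=\mathsf{I}$), $\Phi_1$ discards the quantum input and $\Psi_1$ feeds in a fixed state. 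Combining the appropriate input treatment with the teleportation-based output treatment yields the desired $\Phi_1,\Psi_1$ with $\Psi_1\circ\Phi_1\circ R' = R'$ for every $X \too Y$, establishing $\mathsf{Q} \too \mathsf{C} \succeq_{\rm type} X \too Y$ in all cases, and in particular the two new relations $\mathsf{Q} \too \mathsf{C} \succeq_{\rm type} \mathsf{Q} \too \mathsf{Q}$ and $\mathsf{Q} \too \mathsf{C} \succeq_{\rm type} \mathsf{C} \too \mathsf{Q}$.

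Finally, to conclude that $\mathsf{Q} \too \mathsf{C}$ occupies the \emph{unique} top equivalence class, I would note that $\mathsf{Q} \too \mathsf{Q}$ is maximal in $\succeq_{\rm type}$, since every partition-type embeds into it (any input restricts a quantum input and any output specializes a quantum output). The reverse relation $\mathsf{Q} \too \mathsf{Q} \succeq_{\rm type} \mathsf{Q} \too \mathsf{C}$ also holds by embedding (a classical output is a special quantum output with the same quantum input), so together with the just-proven $\mathsf{Q} \too \mathsf{C} \succeq_{\rm type} \mathsf{Q} \too \mathsf{Q}$ the two types are interconvertible at the level of types and therefore lie in the same equivalence class. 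Since $\mathsf{Q} \too \mathsf{Q}$ is above all other types, so is $\mathsf{Q} \too \mathsf{C}$, and this common class is the unique maximum of the preorder.
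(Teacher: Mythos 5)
Your proposal is correct and is essentially the paper's own proof: the central construction---a Bell measurement on the quantum output together with a fresh quantum input of matching dimension, inverted by locally feeding in half of a maximally entangled state and applying the outcome-conditioned correction (i.e.\ the teleportation identity)---is exactly the free transformation and its left inverse depicted in Fig.~\ref{SQandBack}. The only difference is organizational: you handle the input types by explicit case analysis, whereas the paper absorbs this into its convention that grouped inputs take the least expressive embedding type, so the single construction covers all partition-types at once.
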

\begin{proof}
Consider a bipartite channel $\mathcal{E}$ which has a quantum output of dimension $d$, together with arbitrary other outputs and inputs (denoted by dashed double lines), as shown in black in Fig.~\ref{SQandBack}(a). One can transform $\mathcal{E}$ into a resource with a quantum input of dimension $d$ and a classical output of dimension $d^2$ by composing $\mathcal{E}$ with a Bell measurement as shown in green in Fig.~\ref{SQandBack}(a); that is, by performing a measurement in a maximally entangled basis on the quantum output of $\mathcal{E}$ and a new quantum input of the same dimension $d$.
\begin{figure}[htb!]
\centering
\includegraphics[width=0.48\textwidth]{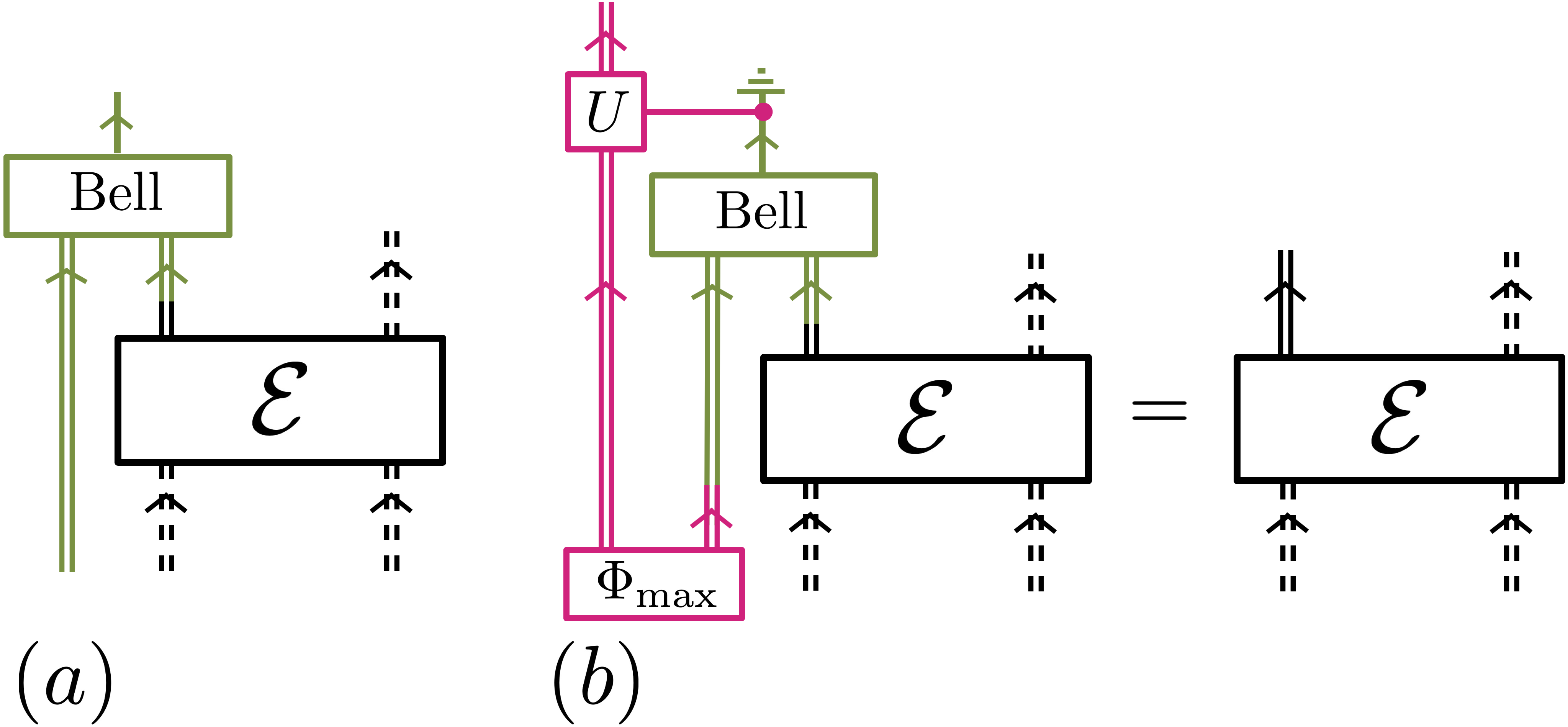}
\caption{  (a) A free transformation (in green) that converts a quantum output to a classical output together with a new quantum input. 
(b) This transformation does not change the LOSR equivalence class, since it has a left inverse (shown in pink) which is a free transformation.} \label{SQandBack}
\end{figure}
To see that this transformation preserves LOSR equivalence class, it suffices to note that there exists a local (and hence free) operation, shown in pink on the left-hand side of Fig.~\ref{SQandBack}(b), which takes the transformed channel back to the original channel $\mathcal{E}$. In particular, this local operation feeds one half of a maximally entangled state $\Phi_{\rm max}$ into the Bell measurement, and then performs a correcting unitary operation $U$ on the other half of the entangled state, conditioned on the classical outcome of the Bell measurement. For the correct choice of correction operations, the overall transformation on $\mathcal{E}$ is just the well-known teleportation protocol~\cite{Bennett93}, and so the equality shown in Fig.~\ref{SQandBack}(b) holds. Hence, the channel in Fig.~\ref{SQandBack}(a) is in the same LOSR equivalence class as $\mathcal{E}$, which implies that every partition of a resource can be transformed to a resource of type $\mathsf{Q} \too \mathsf{C}$ in the same equivalence class.
\end{proof}

Note that $\mathsf{Q} \too \mathsf{Q}$ is trivially also at the top of the order, since every other type embeds into it. It is thus in the same equivalence class as $\mathsf{Q} \too \mathsf{C}$.

\section{A unified framework for distributed games of all types }
\label{unifiedgames}

A variety of `games' have been studied for the purposes of quantifying nonclassicality of various types of resources. 
For instance, the nonclassicality of quantum states has been studied from the point of view of nonlocal games and semiquantum games, as well as teleportation, steering, and entanglement witnessing experiments. Nonlocal games have also been used to study the nonclassicality of boxes. 

In fact, there is a natural class of distributed tasks for every type of resource, including one for each of the common types in Section~\ref{sectypes}. 
\begin{defn} \label{defn:Tgame}
For a given global type $T$, we define a distributed $\mathbf{T}${\bf -game} as a linear map from resources of type $T$ to the real numbers. 
\end{defn}
The set $\mathcal{G}_{\! T}$ of all such maps for fixed $T$ is the set of $T$-games, and a resource of type $T$ is said to be a {\bf strategy} for a $T$-game. This last terminology is motivated by the fact that no matter how complicated the players' tactics, their score for a given $T$-game only depends on the resource of type $T$ that they ultimately share with the referee. We will refer to any game of any type as a distributed game.

In Fig.~\ref{games}, we depict four distributed games together with the type of resource that acts as a strategy for that game. We represent a game diagrammatically as a monolithic comb with appropriate input and output structure such that composition of the comb corresponding to a game $G_{\! T}$ with a strategy $\mathcal{E}_T$ of type $T$ yields a circuit with no open inputs or outputs, representing the real number $G_{\! T}(\mathcal{E}_T)$. 

\begin{figure}[htb!]
\centering
\includegraphics[width=0.48\textwidth]{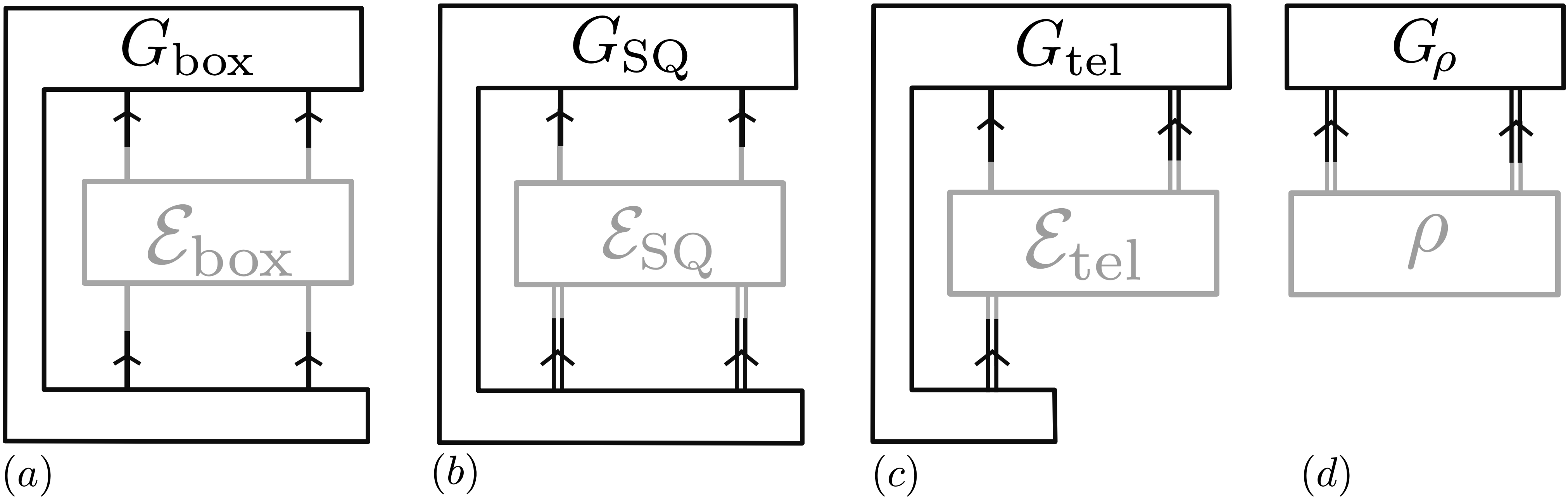}
\caption{ Some games and their strategies. (a) Boxes are strategies for nonlocal games. (b) Semiquantum channels are strategies for semiquantum games. (c) Teleportages are strategies for teleportation games. (d) Entangled states are strategies for entanglement witnesses. 
} \label{games}
\end{figure}

\subsection{Implementations of a game}

We have noted that a variety of games and experiments can be viewed abstractly under the umbrella of $T$-games. The {\em practical} meaning of such games is made more clear by considering the following two-step procedure, by which a referee can implements any game (of any type $T$). This procedure is depicted on the right-hand side of Fig.~\ref{TCgames}. 

First, the referee performs a tomographically complete measurement on the composite system defined by the collection of output systems of the given strategy $\mathcal{E}_T$, and implements a preparation drawn at random from a tomographically complete set of preparations on the composite system defined by the collection of all the systems which are inputs of $\mathcal{E}_T$. 
In fact, it suffices for the referee to perform tomographically complete measurements and preparations {\em independently} on every input and output, as depicted in the dashed box in Fig.~\ref{TCgames}. 
We will refer to this process as the application of an {\bf analyzer} $Z$ to the given strategy. That is, an analyzer $Z$ is a linear and tomographically complete map from strategies to correlations of the form $P_{Z\circ \mathcal{E}_T}(ab|xy) := Z\circ \mathcal{E}_T$, with $a,b$ labeling the values of the classical outputs of $Z$ and $x,y$ the values of the classical inputs of $Z$. Second, the referee uses a fixed payoff function $F_{\rm payoff}(abxy)$ to assign a real number  $G_{\! T}(\mathcal{E}_T)= \sum_{abxy} F_{\rm payoff}(abxy) P_{Z\circ \mathcal{E}_T}(ab|xy)$  to strategy $\mathcal{E}_T$. 


This point of view on games is useful for the proof of Theorem~\ref{subsumestrat}, and it is also useful for establishing a physical picture of games of each type. For example, in a Bell experiment, one applies LOSR operations (or often just LO operations) in order to convert one's quantum state to a conditional probability distribution, and the payoff function in the game constitutes the Bell inequality that one tests. As a second example, see Ref.~\cite{lipkabartosik2019operational} for a study of various teleportation games. As noted therein, there are interesting teleportation tasks (which admit of a simple operational interpretation) beyond merely attempting to establish an identity channel between two parties using shared entanglement.
However, in the rest of this paper it will be simpler to view a game in the abstract (simply as a linear map from resources of a given type to the real numbers), and we will leave the further investigation of such games (beyond the cases which have already been studied) to future work.

\begin{figure}[htb!]
\centering
\includegraphics[width=0.45\textwidth]{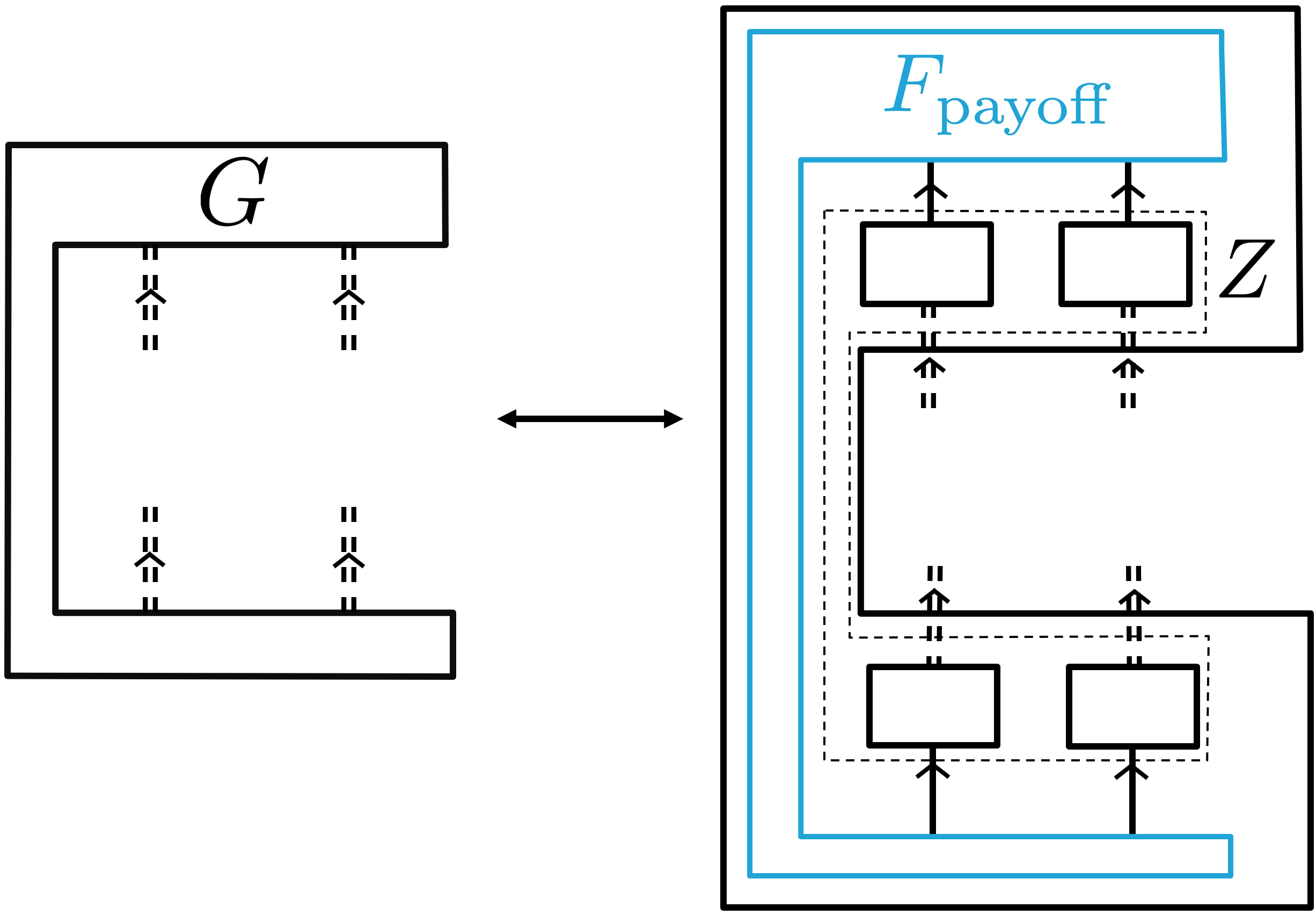}
\caption{
A depiction of the concrete two-step process by which a referee can implement a game (of any type). The referee first applies a tomographically complete analyzer $Z$, and then assigns a real number to the resulting statistics using a payoff function $F_{\rm payoff}$.
} \label{TCgames}
\end{figure}

\subsection{Performance of resources of arbitrary type with respect to a game}

By definition, every $T$-game assigns a real number to every resource of type $T$. At this stage, the number need not be related in any way to the nonclassicality of resources; e.g., the score need not behave monotonically under LOSR. Nonetheless, one can use any $T$-game to learn about the LOSR ordering of resources of type $T$; indeed, the full set of $T$-games perfectly characterizes this preorder. (In case this is not completely obvious, it will follow as a corollary of our Theorem~\ref{proporder}.) 
Furthermore, one can use a $T$-game to (partially) quantify the nonclassicality of a resource {\em of arbitrary type}, not only of type $T$. For example, nonlocal games and semiquantum games have been used to probe the nonclassicality of quantum states~\cite{sq,Shahandeh2017,Supic2017,Rosset2018a}. 
 
This is because---although a $T$-game does not {\em directly} assign a score to
resources of any type other than $T$---it can quantify the performance of a resource of any type by a maximization over all $\tau \in {\rm LOSR}$ which map the given resource to one of type $T$. That is:
\begin{defn}
The (optimal) {\em performance} of a resource $R$ of arbitrary type with respect to a game $G_{\! T}$ of arbitrary type $T$ is given by
\beq
\omega_{G_{\! T}}(R) = \max_{\tau: \mathbf{Type}[R] \! \to \! T} G_{\! T}(\tau \circ R).
\eeq
\end{defn}
Clearly, $\omega_{G_{\! T}}(R)$ is a measure of how well an arbitrary resource $R$ can perform at LOSR-game $G_{\! T}$. Because of the maximization over LOSR operations, $\omega_{G_{\! T}}(R)$ is by construction a monotone with respect to LOSR. Constructions of this sort are often termed {\em yield monotones}~\cite{Gonda2019}. 
We discuss monotones further in Ref.~\cite{rosset2019characterizing}, as monotones are useful tools for obtaining partial information about the preorder over resources and for relating the preorder to practical tasks.

The set $\mathcal{G}_{\! T}$ of all games of a given type $T$ defines a preorder over all resources of all types, where resource $R$ is above $R'$ in the order if for every $T$-game, $R$ can achieve a value at least as high as $R'$ can.
We denote this third ordering relation $\succeq_{\mathcal{G}_{\! T}}$:
\begin{defn} \label{orderwrtgame}
For resources $R$ and $R'$ of different and arbitrary type, we say that $R \succeq_{\mathcal{G}_{\! T}} R'$ iff $\omega_{G_{\! T}}(R) \ge \omega_{G_{\! T}}(R')$ for every $G_{\! T} \in \mathcal{G}_{\! T}$. 
\end{defn}

Next, we prove that if one resource outperforms a second at all possible games of a given type, then it can also generate {\em any specific strategy} of that type which the second resource can generate. This is a nontrivial result, since it need not be the case that the first resource is higher in the LOSR order.
\begin{thm} \label{subsumestrat}
For resources $R$ and $R'$ of different and arbitrary type and a resource $\mathcal{E}_{T}$ of arbitrary type $T$, 
$R \succeq_{\mathcal{G}_{\! T}} R'$ iff $R' \LOSRconv \mathcal{E}_{T} \implies R \LOSRconv \mathcal{E}_{T}$. That is, any strategy $\mathcal{E}_{T}$ for games of type $T$ that can be freely generated from $R'$ can {\em also} be freely generated from $R$. 
\end{thm}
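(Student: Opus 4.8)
The plan is to prove the two implications separately, since the equivalence is really the statement that the linear functionals $\mathcal{G}_{\! T}$ separate precisely the LOSR-achievable strategies of type $T$. The direction ``[$R' \LOSRconv \mathcal{E}_{T} \implies R \LOSRconv \mathcal{E}_{T}$ for all $\mathcal{E}_T$] $\implies R \succeq_{\mathcal{G}_{\! T}} R'$'' is the easy one. Fixing any $G_{\! T}$, let $\tau^*$ attain the maximum defining $\omega_{G_{\! T}}(R')$, so that $\mathcal{E}_T := \tau^* \circ R'$ is a type-$T$ strategy with $R' \LOSRconv \mathcal{E}_T$. By hypothesis $R \LOSRconv \mathcal{E}_T$, i.e.\ $\mathcal{E}_T = \sigma \circ R$ for some $\sigma \in \LOSR$; hence $\omega_{G_{\! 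T}}(R) \ge G_{\! T}(\sigma \circ R) = G_{\! T}(\mathcal{E}_T) = \omega_{G_{\! T}}(R')$. As $G_{\! T}$ was arbitrary, $R \succeq_{\mathcal{G}_{\! T}} R'$. (If one prefers not to assume the maximum is attained, the same conclusion follows by an $\varepsilon$-approximation argument.)

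For the converse I would argue by contraposition via a separating-hyperplane argument. Introduce the set $S_R := \{\, \tau \circ R : \tau \in \LOSR,\ \tau\colon \mathbf{Type}[R] \to T \,\}$ of all type-$T$ strategies LOSR-reachable from $R$, and likewise $S_{R'}$; both are subsets of the finite-dimensional real vector space in which type-$T$ resources live. The goal becomes $S_{R'} \subseteq S_R$ whenever $R \succeq_{\mathcal{G}_{\! T}} R'$. Suppose instead some $\mathcal{E}_T \in S_{R'}$ fails to lie in $S_R$. Since shared randomness lets one probabilistically mix any two admissible protocols, $S_R$ is convex, and (see the obstacle below) compact. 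The separating-hyperplane theorem then supplies a linear functional $G_{\! T}$ on type-$T$ resources together with a constant $c$ satisfying $G_{\! T}(s) \le c < G_{\! T}(\mathcal{E}_T)$ for every $s \in S_R$. Crucially, by Definition~\ref{defn:Tgame} a $T$-game is \emph{any} linear map from type-$T$ resources to $\mathbb{R}$ (with no positivity or normalization constraint), so this separating functional is itself a legitimate $T$-game. It then forces $\omega_{G_{\! T}}(R) = \max_{s\in S_R} G_{\! T}(s) \le c < G_{\! T}(\mathcal{E}_T) \le \max_{s'\in S_{R'}} G_{\! T}(s') = \omega_{G_{\! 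T}}(R')$, contradicting $R \succeq_{\mathcal{G}_{\! T}} R'$. Hence $\mathcal{E}_T \in S_R$, which is exactly $R \LOSRconv \mathcal{E}_T$, completing the converse.

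The main obstacle is the topological input to the separation step: that $S_R$ is a closed (equivalently, compact) convex set. Convexity is the easy half, inherited from the convexity of LOSR under shared randomness already noted for Fig.~\ref{typechange}(a). Closedness is more delicate, because the local combs comprising $\tau$ may \emph{a priori} use ancillas and side-channels of unbounded dimension, so $S_R$ is presented as a union over an unbounded family of protocols rather than as the continuous image of a single compact set. I would resolve this by bounding the useful ancilla dimension: the input and output types of $\mathcal{E}_T$ are fixed, so achievable strategies lie in a fixed finite-dimensional space, and a Carath\'eodory- or purification-style argument shows that operations with bounded-dimension ancillas already realize every point of $S_R$ (and its closure). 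With $S_R$ compact and convex the separation argument is valid, and this same compactness retroactively justifies writing $\max$ rather than $\sup$ in the definition of $\omega_{G_{\! T}}$.
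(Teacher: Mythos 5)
Your proof is correct, and your easy direction matches the paper's. For the converse, however, you take a genuinely different (though closely related) route: you apply the separating-hyperplane theorem directly in the space of type-$T$ strategies, exploiting the fact that Definition~\ref{defn:Tgame} admits \emph{any} linear functional as a game, so the separating functional is itself a legitimate $T$-game. The paper instead separates in the space of \emph{correlations}: it fixes a single tomographically complete analyzer $Z$, restricts attention to games of the form ``fixed $Z$ plus varying payoff $F_{\rm payoff}$,'' runs the hyperplane argument on the convex set $S(R)$ of correlations $Z\circ\tau\circ R$, and then invokes tomographic completeness (injectivity of $Z$ on strategies) to lift the conclusion from correlations back to strategies, forcing $\mathcal{E}_T'=\mathcal{E}_T$. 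The two arguments are equivalent modulo the linear embedding $Z$, but each buys something: yours is shorter and needs neither the analyzer machinery nor the injectivity step; the paper's establishes the operationally stronger fact that a \emph{single fixed} tomographically complete analyzer with varying classical payoffs already suffices to witness any failure of $R \succeq_{\mathcal{G}_{\! T}} R'$, which is what underwrites the measurement-device-independent applications later in the paper. Finally, your explicit concern about closedness of $S_R$ is well placed---the paper passes over it silently, yet strict separation (which both proofs need in order to actually contradict $R \succeq_{\mathcal{G}_{\! T}} R'$) requires it---and your proposed fix works; the cleanest version is to note that local combs with fixed finite-dimensional external wires are fully characterized by their Choi operators, which form a compact convex set independently of any internal ancilla dimension, so $S_R$ is the convex hull of the continuous image of a compact set and hence compact, which also retroactively justifies the $\max$ in the definition of $\omega_{G_{\! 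T}}$.
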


\begin{proof}
If $R' \LOSRconv \mathcal{E}_{T} \implies R \LOSRconv \mathcal{E}_{T}$, then $R$ can generate any strategy for any given game $G_{\! T}$ that $R'$ can, and so always performs at least as well as $R'$ at $T$-games, and so $R \succeq_{\mathcal{G}_{\! T}} R'$.

To prove the converse, consider a set of games of type $T$ defined by ranging over all possible payoff functions $F_{\rm payoff}(abxy)$ for some fixed analyzer $Z$---that is, a specific tomographically complete measurement for each output system of the resource and a specific tomographically complete set of states for each input system of the resource.
Assume that $R' \LOSRconv \mathcal{E}_{T}$ for some strategy $\mathcal{E}_{T}$, and define $P_{Z\circ \mathcal{E}_T}(ab|xy) = Z \circ \mathcal{E}_{T}$. 
For $R \succeq_{\mathcal{G}_{\! T}} R'$, it must be that $R \LOSRconv \mathcal{E}_{T}'$ for at least one strategy $\mathcal{E}_{T}'$ satisfying $P_{Z\circ \mathcal{E}'_T}(ab|xy) = Z \circ \mathcal{E}_{T}'$. If this were {\em not} the case, then the convex set $S(R)$ of all correlations which $R$ can generate in this scenario, $S(R):=\left\{P_{Z \circ \tau \circ R}(ab|xy)=Z \circ \tau \circ R \right\}_{\tau \in {\rm LOSR}}$, would not contain $P_{Z\circ \mathcal{E}_T}(ab|xy)$, and the hyperplane which separated $P_{Z\circ \mathcal{E}_T}(ab|xy)$ from $S$ would constitute a payoff function $F_{\rm payoff}$ for which $R'$ outperformed $R$, which would be in contradiction with the claim that $R \succeq_{\mathcal{G}_{\! T}} R'$. By tomographic completeness, the preimage of every correlation  under $Z$ contains at most one strategy. Hence, if two strategies map to the same correlation, then they must be the same strategy, and so it must be that $\mathcal{E}_{T}=\mathcal{E}_{T}'$ in argument above. That is, we have shown that if $R \succeq_{\rm SQ} R'$ and $R' \LOSRconv \mathcal{E}_{T}$, then $R \LOSRconv \mathcal{E}_{T}$.
\end{proof}

\subsection{Implications from the type of a resource to its performance at games} \label{implictypetoperf}

We now prove that games of a higher type perfectly characterize the LOSR nonclassicality of resources of a lower type. 
\begin{thm} \label{proporder}
If $T \succeq_{\rm type} T'$, then for resources $R_1,R_2$ of type $T'$, $R_1 \succeq_{\rm LOSR} R_2$ iff $R_1 \succeq_{\mathcal{G}_{\! T}} R_2$. Equivalently: if type $T$ is above type $T'$, then for resources of type $T'$, the orders defined by $\succeq_{\rm LOSR}$ and $\succeq_{\mathcal{G}_{\! T}}$ are identical.
\end{thm}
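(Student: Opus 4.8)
The plan is to prove the two implications of the stated biconditional separately, observing at the outset that only the reverse direction will actually invoke the hypothesis $T \succeq_{\rm type} T'$; the forward direction is nothing more than the monotonicity of the yield monotone $\omega_{G_{\! T}}$ and holds for any pair of types.

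First I would dispatch the direction $R_1 \succeq_{\rm LOSR} R_2 \implies R_1 \succeq_{\mathcal{G}_{\! T}} R_2$. Suppose $R_1 \LOSRconv R_2$, witnessed by some $\sigma \in \LOSR$ with $R_2 = \sigma \circ R_1$. Fix an arbitrary $G_{\! T} \in \mathcal{G}_{\! T}$ and let $\tau$ be any LOSR operation attaining the maximum defining $\omega_{G_{\! T}}(R_2)$, i.e.\ mapping the type $T'$ of $R_2$ to $T$. Since LOSR operations compose, $\tau \circ \sigma$ is again an LOSR operation carrying $R_1$ to type $T$, and $G_{\! T}(\tau \circ \sigma \circ R_1) = G_{\! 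T}(\tau \circ R_2)$ is therefore one of the values over which the maximum defining $\omega_{G_{\! T}}(R_1)$ ranges. Hence $\omega_{G_{\! T}}(R_1) \ge \omega_{G_{\! T}}(R_2)$, and as $G_{\! T}$ was arbitrary, $R_1 \succeq_{\mathcal{G}_{\! T}} R_2$.

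The substance lies in the converse, $R_1 \succeq_{\mathcal{G}_{\! T}} R_2 \implies R_1 \succeq_{\rm LOSR} R_2$, where I would lean on the encoding hypothesis together with Theorem~\ref{subsumestrat}. Because $T \succeq_{\rm type} T'$ and $R_2$ has type $T'$, there exists a resource $\mathcal{E}_T$ of type $T$ lying in the same LOSR equivalence class as $R_2$, so that both $R_2 \LOSRconv \mathcal{E}_T$ and $\mathcal{E}_T \LOSRconv R_2$ hold. Applying Theorem~\ref{subsumestrat} to the hypothesis $R_1 \succeq_{\mathcal{G}_{\! T}} R_2$, we learn that every strategy of type $T$ that is LOSR-generable from $R_2$ is also LOSR-generable from $R_1$. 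Instantiating this at the specific strategy $\mathcal{E}_T$ and using $R_2 \LOSRconv \mathcal{E}_T$, we obtain $R_1 \LOSRconv \mathcal{E}_T$. Composing with $\mathcal{E}_T \LOSRconv R_2$ and invoking transitivity of the LOSR preorder then yields $R_1 \LOSRconv R_2$, i.e.\ $R_1 \succeq_{\rm LOSR} R_2$.

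The main obstacle — the step that makes the theorem true rather than a routine diagram chase — is precisely the reverse direction's reliance on the existence of a type-$T$ representative $\mathcal{E}_T$ inside the LOSR equivalence class of $R_2$. Without the assumption $T \succeq_{\rm type} T'$ one has no guarantee that the statistics extracted by $T$-games are fine enough to recover the full LOSR ordering of type-$T'$ resources, and the converse genuinely fails in such cases (as in Werner's example, where an entangled state's nonclassicality is entirely invisible to box-type games). The encoding hypothesis is exactly what upgrades Theorem~\ref{subsumestrat}'s conclusion about \emph{generable strategies} into a statement about the resources themselves, by supplying a type-$T$ proxy LOSR-interconvertible with $R_2$. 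The only point I would verify carefully is that the representative $\mathcal{E}_T$ guaranteed by $T \succeq_{\rm type} T'$ is allowed to depend on $R_2$; this causes no difficulty, since $R_2$ is held fixed throughout the argument.
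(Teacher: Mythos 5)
Your proof is correct and follows essentially the same route as the paper's: the forward direction via composition of LOSR operations (monotonicity of $\omega_{G_{\! T}}$), and the reverse direction by using $T \succeq_{\rm type} T'$ to obtain a type-$T$ representative $\mathcal{E}_T$ with $R_2 \LOSRinterconv \mathcal{E}_T$, invoking Theorem~\ref{subsumestrat} to get $R_1 \LOSRconv \mathcal{E}_T$, and concluding $R_1 \LOSRconv R_2$ by transitivity. Your closing remarks on where the encoding hypothesis is genuinely needed (and that $\mathcal{E}_T$ may depend on $R_2$) are accurate but add nothing beyond the paper's argument.
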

\begin{proof}
Consider the set $\mathcal{G}_{\! T}$ of all games of type $T$ and two resources $R_1$ and $R_2$, both of type $T'$, where $T \succeq_{\rm type} T'$.
Clearly $R_1 \succeq_{\rm LOSR} R_2$ implies $R_1 \succeq_{\mathcal{G}_{\! T}} R_2$, since $R_1 \succeq_{\rm LOSR} R_2$ implies that $R_1$ can be used to freely generate $R_2$ and hence to generate any strategy which can be generated using $R_2$.
Next, we prove that $R_1 \succeq_{\mathcal{G}_{\! T}} R_2$ implies $R_1 \succeq_{\rm LOSR} R_2$. By assumption, $T \succeq_{\rm type} T'$, and so for $R_2$ of type $T'$, there exists a strategy $\mathcal{E}_{ T}$ for games of type $T$ such that $R_2 \LOSRinterconv \mathcal{E}_{ T}$.
Since $R_1 \succeq_{\mathcal{G}_{\! T}} R_2$, Theorem~\ref{subsumestrat} tells us that $R_2 \LOSRconv \mathcal{E}_{ T}$ implies $R_1 \LOSRconv \mathcal{E}_{ T}$, and hence $R_1 \LOSRconv R_2$ by transitivity.
Hence we have proven that the two orderings are the same; that is, $R_1 \succeq_{\rm LOSR}R_2$ if and only if $R_1 \succeq_{\mathcal{G}_{\! T}}R_2$. 
\end{proof}

A consequence of this result is that if $T \succeq_{\rm type} T'$, then every nonfree resource of type $T'$ is useful for some $T$-game. Two special cases of this fact that were previously proved are that all entangled states are useful for semiquantum games and that all entangled states are useful for teleportation. 

If one views the encoding of one type into another type as an embedding of the partial order over equivalence classes of resources of the lower type into the partial order of the higher type, then this result can be seen as a consequence of the fact that games of type $T$ are sufficient for characterizing the partial order over resources of type $T$.

A corollary of Theorem~\ref{squniv} and Theorem~\ref{proporder} is that semiquantum games fully characterize the LOSR ordering among all resources of arbitrary type.
\begin{cor} \label{Buscgen}
For any resources $R$ and $R'$ (which may be of arbitrary and different types), $R \succeq_{\rm LOSR} R'$ if and only if $R \succeq_{\mathcal{G}_{\rm SQ}}R'$. 
\end{cor}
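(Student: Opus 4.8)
The plan is to leverage the fact that this statement is a genuine strengthening of Theorem~\ref{proporder}: that theorem identifies $\succeq_{\rm LOSR}$ with $\succeq_{\mathcal{G}_{\! T}}$ only for two resources $R_1,R_2$ drawn from a single common type $T'$, whereas here $R$ and $R'$ may have arbitrary and even distinct types. The device that closes this gap is Theorem~\ref{squniv}, which places a semiquantum representative in every LOSR equivalence class. The forward implication requires neither theorem: if $R \succeq_{\rm LOSR} R'$, i.e. $R \LOSRconv R'$, then any semiquantum strategy freely generable from $R'$ is also generable from $R$ (convert $R$ to $R'$ first), so $R$ scores at least as high as $R'$ at every semiquantum game, which is exactly $R \succeq_{\mathcal{G}_{\rm SQ}} R'$.

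For the converse, I would first invoke Theorem~\ref{squniv} to choose semiquantum resources $\mathcal{E}$ and $\mathcal{E}'$ with $R \LOSRinterconv \mathcal{E}$ and $R' \LOSRinterconv \mathcal{E}'$. Since each performance $\omega_{G_{\! T}}$ is an LOSR monotone, interconvertible resources attain equal scores at every game; hence replacing $R,R'$ by $\mathcal{E},\mathcal{E}'$ leaves both relations $\succeq_{\mathcal{G}_{\rm SQ}}$ and $\succeq_{\rm LOSR}$ unchanged. Because $\mathcal{E}$ and $\mathcal{E}'$ may carry different local dimensions, I would then embed both into one semiquantum type $T_{\rm SQ}$ of sufficiently large dimension (a larger semiquantum type embeds all smaller ones, and such an embedding is a reversible free operation, so it preserves equivalence classes), allowing $\mathcal{E},\mathcal{E}'$ to be read as resources of the single type $T_{\rm SQ}$. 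Now $T_{\rm SQ} \succeq_{\rm type} T_{\rm SQ}$ holds by reflexivity (indeed $T_{\rm SQ}$ sits at the top by Theorem~\ref{squniv}), so Theorem~\ref{proporder} applies with $T=T'=T_{\rm SQ}$: for these type-$T_{\rm SQ}$ resources, $\succeq_{\rm LOSR}$ and $\succeq_{\mathcal{G}_{T_{\rm SQ}}}$ coincide. Since $R \succeq_{\mathcal{G}_{\rm SQ}} R'$ restricts to $\mathcal{E} \succeq_{\mathcal{G}_{T_{\rm SQ}}} \mathcal{E}'$ (games of type $T_{\rm SQ}$ being a subset of all semiquantum games), Theorem~\ref{proporder} yields $\mathcal{E} \succeq_{\rm LOSR} \mathcal{E}'$, and transporting back along $R \LOSRinterconv \mathcal{E}$ and $\mathcal{E}' \LOSRinterconv R'$ gives $R \succeq_{\rm LOSR} R'$ by transitivity.

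The main obstacle, and the only step beyond a direct citation of Theorems~\ref{squniv} and~\ref{proporder}, is exactly that $R$ and $R'$ need not share a type: this forces the passage to semiquantum representatives (justified by monotonicity) and the embedding into a common semiquantum type. A cleaner alternative that avoids the embedding is to route the converse through Theorem~\ref{subsumestrat} instead: taking $\mathcal{E}'$ to be a semiquantum representative of $R'$, the relation $R \succeq_{\mathcal{G}_{\rm SQ}} R'$ gives $R \succeq_{\mathcal{G}_{\rm SQ}} \mathcal{E}'$ and hence dominance at games of $\mathcal{E}'$'s own type; applying Theorem~\ref{subsumestrat} with the strategy taken to be $\mathcal{E}'$ itself (which trivially satisfies $\mathcal{E}' \LOSRconv \mathcal{E}'$) converts this into $R \LOSRconv \mathcal{E}'$, and then $\mathcal{E}' \LOSRconv R'$ yields $R \LOSRconv R'$ by transitivity, with no need to also replace $R$ or to match dimensions.
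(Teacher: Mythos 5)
Your proposal is correct and takes essentially the same route as the paper, which obtains this corollary simply by combining Theorem~\ref{squniv} (the semiquantum partition-type sits at the top of $\succeq_{\rm type}$) with Theorem~\ref{proporder}. The extra bookkeeping you supply---passing to semiquantum representatives via monotonicity of $\omega_{G_{\! T}}$ so that Theorem~\ref{proporder} applies verbatim to two resources of one common type, or alternatively applying Theorem~\ref{subsumestrat} with the representative $\mathcal{E}'$ itself as the strategy---just makes explicit the same-type restriction that the paper's one-line derivation glosses over, and your second route is in fact exactly how the paper proves Theorem~\ref{proporder} internally.
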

\noindent This generalizes the main result of Ref.~\cite{sq} from quantum states to resources {\em of arbitrary type}.
Since semiquantum games characterize the LOSR nonclassicality of arbitrary resources, and since referees in semiquantum games do not require any well-characterized quantum measurement devices~\cite{steer}, it follows that the nonclassicality of any resource of any type can be characterized in a measurement-device-independent manner. 

Note that for such tests to be practically useful, it must be possible to convert an {\em unknown} resource into a semiquantum channel in the same LOSR equivalence class. This is indeed possible, because for all resources of a given type, there is a {\em single} transformation which implements the conversion, namely, the Bell measurement in Fig.~\ref{SQandBack}(a). Critically, this transformation is not a function of the resource to be converted.

\section{Extending results from the literature }
\label{extending}
We now give further applications of our results, in particular showing how our framework extends a number of seminal results from the literature.

\subsection{Applying semiquantum games to perfectly characterize arbitrary quantum channels} 

Buscemi proved in Ref.~\cite{sq} that the order over quantum states with respect to LOSR is equivalent to the order over quantum states defined by their performance with respect to semiquantum games. This result is an instance of our Corollary~\eqref{Buscgen} where $R$ and $R'$ are both quantum states. 

For concreteness, we now briefly reiterate the argument in this specific context. The existence of the invertible transformation in Fig.~\ref{SQandBack} implies that $\mathsf{II} \too \mathsf{QQ}$ is below $\mathsf{QQ} \too \mathsf{CC}$ in the order on global types, and hence that

\begin{figure}[h!]
\centering
\includegraphics[width=0.47\textwidth]{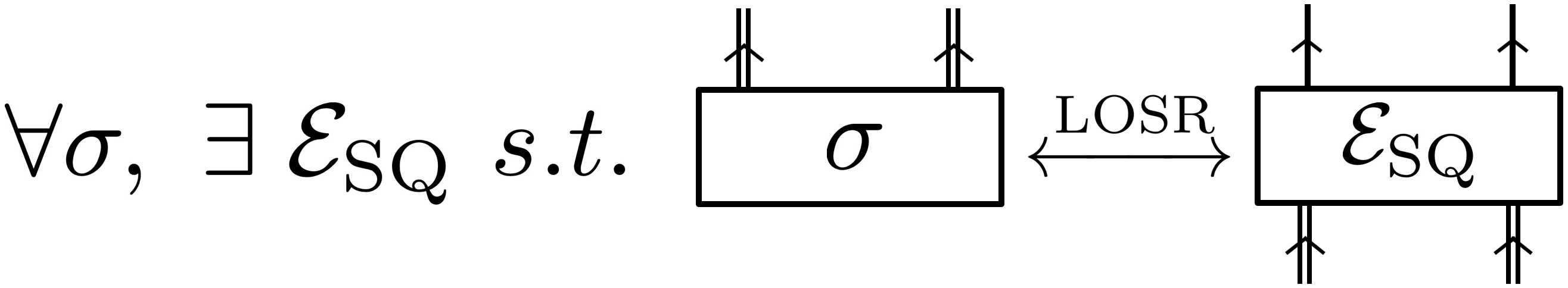}.
\end{figure}

\noindent For this $\sigma$ and $\mathcal{E}_{\rm SQ}$ such that $\sigma \LOSRconv \mathcal{E}_{\rm SQ}$, Theorem~\ref{subsumestrat} states that if $\rho \succeq_{\mathcal{G}_{\rm SQ}} \sigma$, then $\rho \LOSRconv \mathcal{E}_{SQ}$. Since $\mathcal{E}_{\rm SQ} \LOSRconv \sigma$, transitivity gives that $\rho \succeq_{\mathcal{G}_{\rm SQ}} \sigma \implies \rho \LOSRconv \sigma$. 
Since the converse implication is self-evident, one sees that the LOSR order over quantum states is equivalent to the order over quantum states defined by their performance with respect to semiquantum games.

This proof is inspired by the original argument in Ref.~\cite{sq}, but our framework makes the proof shorter and more intuitive. As we saw in Corollary~\eqref{Buscgen}, it also allowed us to generalize the result from quantum states to arbitrary resources.
As stated above, this implies that the LOSR nonclassicality of {\em any} resource can be witnessed and quantified in a measurement-device-independent~\cite{Branciard2013,steer} manner.

\subsection{Applying measurement-device-independent steering games to perfectly characterize assemblages } 

Cavalcanti, Hall, and Wiseman proved in Ref.~\cite{steer} that the LOSR order over quantum states defined by subset inclusion over the assemblages that each can generate via LOSR is equivalent to the order over quantum states defined by their performance with respect to steering games.  
This result is a special case of our Theorem~\ref{subsumestrat}, where $R$ and $R'$ are quantum states and $\mathcal{E}_T$ is a steering assemblage:
\begin{cor}
 $\rho \succeq_{\mathcal{G}_{\rm steer}} \sigma$ iff $\sigma \LOSRconv \mathcal{E}_{\rm steer} \implies \rho \LOSRconv \mathcal{E}_{\rm steer}$. 
\end{cor}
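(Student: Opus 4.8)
The plan is to recognize this corollary as nothing more than a direct specialization of Theorem~\ref{subsumestrat}, obtained by fixing the types of the three resources appearing there. Concretely, I would instantiate the general statement by taking $R = \rho$ and $R' = \sigma$ to both be quantum states (global type $\mathsf{II} \too \mathsf{QQ}$), taking the target type $T$ to be the steering type $\mathsf{CI} \too \mathsf{CQ}$, and taking the strategy $\mathcal{E}_T = \mathcal{E}_{\rm steer}$ to be a steering assemblage. Under these substitutions the abstract set of $T$-games $\mathcal{G}_{\! T}$ becomes exactly $\mathcal{G}_{\rm steer}$, the set of steering games, and the relation $\succeq_{\mathcal{G}_{\! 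T}}$ becomes $\succeq_{\mathcal{G}_{\rm steer}}$. With these identifications in place, the biconditional asserted by Theorem~\ref{subsumestrat}, namely $R \succeq_{\mathcal{G}_{\! T}} R'$ iff $(R' \LOSRconv \mathcal{E}_{T} \implies R \LOSRconv \mathcal{E}_{T})$, reads verbatim as $\rho \succeq_{\mathcal{G}_{\rm steer}} \sigma$ iff $(\sigma \LOSRconv \mathcal{E}_{\rm steer} \implies \rho \LOSRconv \mathcal{E}_{\rm steer})$, which is precisely the claim.

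The single point that warrants explicit verification is that the ``steering games'' of Ref.~\cite{steer} coincide with the $T$-games of Definition~\ref{defn:Tgame} for the steering type. This is immediate within the framework of the paper: a $T$-game is by definition a linear map from resources of type $T$ to $\mathbb{R}$, and a steering game is exactly such a linear functional on steering assemblages (equivalently, by the analyzer picture, a choice of tomographically complete measurements and preparations on the assemblage's inputs and outputs together with a payoff function $F_{\rm payoff}$). Hence no gap arises in matching the two notions, and the specialization is literal rather than approximate.

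Because the corollary is pure instantiation, I expect no genuine obstacle in the proof; the only care needed is to state the three type substitutions cleanly so that the reader sees that steering assemblages play the role of the arbitrary strategy $\mathcal{E}_T$ while both compared resources are held to be states. I would therefore write the proof in a single short paragraph: declare the substitutions, invoke Theorem~\ref{subsumestrat}, and observe that its conclusion is the desired equivalence. If one wished to make the connection to Ref.~\cite{steer} fully explicit, one could additionally remark that combining this corollary with Theorem~\ref{proporder}---specialized to the established ordering $\mathsf{CI} \too \mathsf{CQ} \succeq_{\rm type} \mathsf{II} \too \mathsf{QQ}$---recovers the Cavalcanti--Hall--Wiseman result that the LOSR order over states defined by assemblage-inclusion agrees with their order under steering-game performance.
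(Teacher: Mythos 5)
Your proposal is correct and matches the paper's own treatment exactly: the paper presents this corollary as a direct special case of Theorem~\ref{subsumestrat} with $R$ and $R'$ taken to be quantum states and $\mathcal{E}_T$ a steering assemblage, which is precisely your instantiation. Your extra check that steering games are the $T$-games of Definition~\ref{defn:Tgame} for type $\mathsf{CI} \too \mathsf{CQ}$ is a harmless (and accurate) elaboration of the same argument.
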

\noindent Our Theorem~\ref{subsumestrat} extends this result to arbitrary resource types and games.

Additionally, the existence of the invertible transformation in Fig.~\ref{SQandBack} immediately implies that

\begin{figure}[htb!]
\centering
\includegraphics[width=0.47\textwidth]{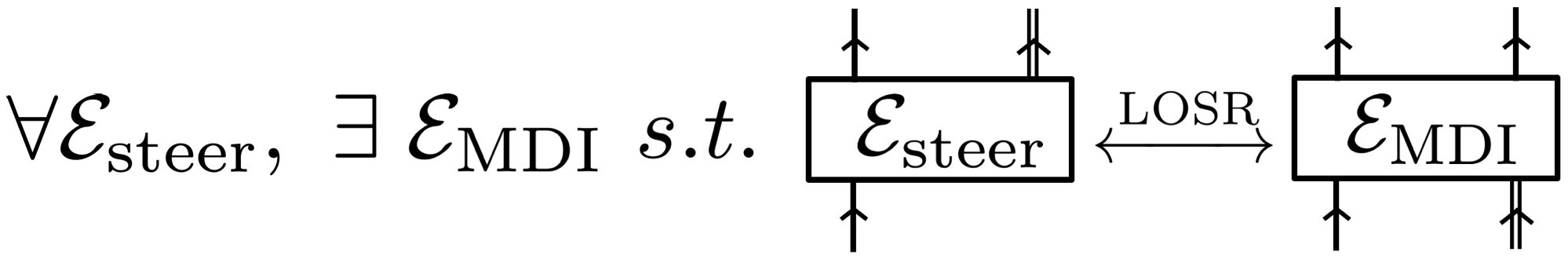}.
\end{figure}
\noindent In other words, $\mathsf{CI} \too \mathsf{CQ}$ is below $\mathsf{CQ} \too \mathsf{CC}$ in the order on global types.
Our Theorem~\ref{proporder} then gives a new result, which is the direct analogue of the result in Ref.~\cite{sq} in this new context: the LOSR order over assemblages is equivalent to the order over assemblages defined by performance relative to all measurement-device-independent steering games.
Explicitly: the fact (proven in Section~\ref{deterencode}) that $T_{\rm MDI} \succeq_{\rm type} T_{\rm steer}$  implies that
\begin{cor}
For two assemblages $\mathcal{E}_{\rm steer}$ and $\mathcal{E}_{\rm steer}'$, one has $\mathcal{E}_{\rm steer} \succeq_{\rm LOSR} \mathcal{E}_{\rm steer}'$ iff $\mathcal{E}_{\rm steer} \succeq_{\mathcal{G}_{\rm MDI}} \mathcal{E}_{\rm steer}'$.
\end{cor}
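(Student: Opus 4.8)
The plan is to obtain this corollary as a direct instance of Theorem~\ref{proporder}, taking the higher type to be $T_{\rm MDI} = \mathsf{CQ} \too \mathsf{CC}$ and the lower type to be $T_{\rm steer} = \mathsf{CI} \too \mathsf{CQ}$, with $R_1 = \mathcal{E}_{\rm steer}$ and $R_2 = \mathcal{E}_{\rm steer}'$. Theorem~\ref{proporder} then immediately yields the biconditional between $\succeq_{\rm LOSR}$ and $\succeq_{\mathcal{G}_{\rm MDI}}$ on resources of type $T_{\rm steer}$, so the only nontrivial content to establish is the type ordering $T_{\rm MDI} \succeq_{\rm type} T_{\rm steer}$.

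To establish $T_{\rm MDI} \succeq_{\rm type} T_{\rm steer}$, I would invoke the partition-type relation $\mathsf{Q} \too \mathsf{C} \succeq_{\rm type} \mathsf{I} \too \mathsf{Q}$ together with the principle that a global ordering follows whenever each partition is ordered on the corresponding party. A steering assemblage has Alice's partition of type $\mathsf{C} \too \mathsf{C}$ and Bob's of type $\mathsf{I} \too \mathsf{Q}$. Concretely, composing Bob's quantum output with the Bell measurement of Fig.~\ref{SQandBack}(a) — the special case of the Theorem~\ref{squniv} construction applied to a single quantum output — converts his partition-type $\mathsf{I} \too \mathsf{Q}$ into $\mathsf{Q} \too \mathsf{C}$ while leaving Alice's partition untouched, so that the global type becomes exactly $\mathsf{CQ} \too \mathsf{CC} = T_{\rm MDI}$. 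By the argument in the proof of Theorem~\ref{squniv}, this transformation admits a local (hence free) left inverse — feed in half of a maximally entangled state and apply the outcome-conditioned teleportation correction — so it preserves the LOSR equivalence class. Thus every assemblage has an MDI-steering channel in its LOSR equivalence class, which is precisely the statement $T_{\rm MDI} \succeq_{\rm type} T_{\rm steer}$.

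I expect no serious obstacle here: the corollary is essentially a bookkeeping exercise once Theorem~\ref{proporder} and the universality construction of Theorem~\ref{squniv} are in hand. The only point requiring care is confirming that the Bell-measurement transformation acts \emph{only} on the steered (quantum-output) party, so that the resulting resource lands in the MDI-steering type $\mathsf{CQ} \too \mathsf{CC}$ rather than some other type, and that the teleportation-based inverse is genuinely a single-party LOSR operation. With the type ordering verified, the direction $\mathcal{E}_{\rm steer} \succeq_{\rm LOSR} \mathcal{E}_{\rm steer}' \implies \mathcal{E}_{\rm steer} \succeq_{\mathcal{G}_{\rm MDI}} \mathcal{E}_{\rm steer}'$ is the easy monotonicity half of Theorem~\ref{proporder}, while the converse is the substantive half supplied by the type encoding.
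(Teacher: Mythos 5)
Your proposal is correct and follows essentially the same route as the paper: the paper likewise establishes $T_{\rm MDI} \succeq_{\rm type} T_{\rm steer}$ by applying the invertible Bell-measurement/teleportation construction of Fig.~\ref{SQandBack} (Theorem~\ref{squniv}) to the steered party's $\mathsf{I} \too \mathsf{Q}$ partition, and then invokes Theorem~\ref{proporder} to obtain the biconditional. Your extra care about the transformation acting only on Bob's side and the inverse being a local operation matches exactly what the paper's figure and Section~\ref{deterencode} supply.
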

\noindent Indeed, this theorem holds not just for assemblages, but for any resource type which is lower in the global order than measurement-device-independent steering channels, including channel steering assemblages and Bob-with-input assemblages.

\subsection{Applying teleportation games to perfectly characterize quantum states} \label{app:tel}

Cavalcanti, Skrzypczyk, and $\check{\rm S}$upi\'{c} proved in Ref.~\cite{telep} that the nonclassicality of every entangled state can be witnessed by some teleportation experiment. 
We apply arguments analogous to those of the last two subsections to strengthen their results, most notably in Corollary~\ref{quantup}, which provides the quantitative analogue of their (qualitative) main result.

First, the existence of the invertible transformation in Fig.~\ref{SQandBack} again implies that

\begin{figure}[htb!]
\centering
\includegraphics[width=0.47\textwidth]{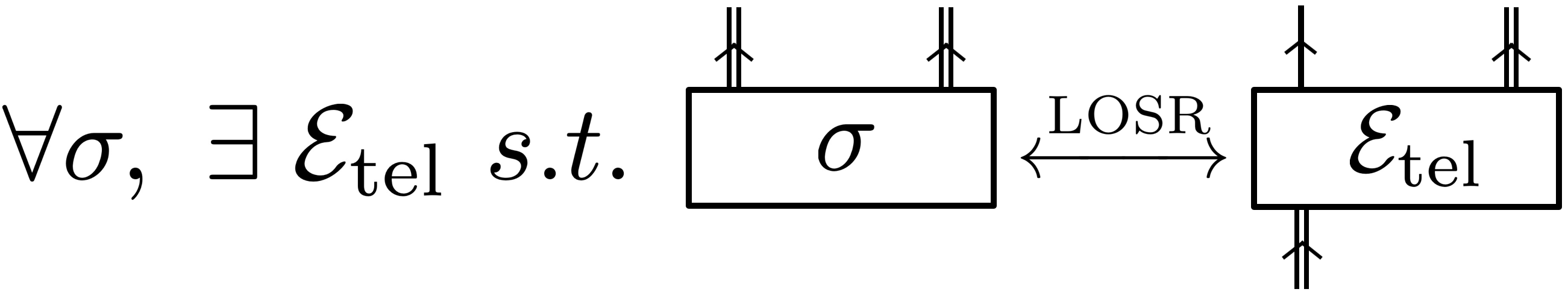}.
\end{figure}

\noindent In other words, $\mathsf{II} \too \mathsf{QQ}$ is below $\mathsf{QI} \too \mathsf{CQ}$ in the order on global types.
Our Theorem~\ref{proporder} again yields a result analogous to that in Ref.~\cite{sq}, namely, that the LOSR order over entangled states is equivalent to the order over entangled states with respect to performance at teleportation games\footnote{It is worth noting that there are subtleties in the relationship between teleportation games (as defined here, and see also Ref.~\cite{lipkabartosik2019operational}) and the usual conception of teleportation experiments (as attempts to establish an identity channel between two parties using shared entanglement). For example, note that any nonfree assemblage constitutes a special instance of a teleportage which is useless for generating a coherent quantum channel between two parties, and yet which {\em is} useful for some teleportation game.}.
Explicitly: denoting the type of quantum states by $T_{\rho}$, the fact that $T_{\rm tel} \succeq_{\rm type} T_{\rho}$  implies that
\begin{cor} \label{quantup}
$\rho \succeq_{\rm LOSR} \sigma$ iff $\rho \succeq_{\mathcal{G}_{\rm tel}} \sigma$.
\end{cor}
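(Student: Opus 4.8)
The plan is to recognize this corollary as a direct specialization of Theorem~\ref{proporder}, taking the higher type $T$ to be the teleportage type $T_{\rm tel} = \mathsf{QI} \too \mathsf{CQ}$, the lower type $T'$ to be the quantum-state type $T_\rho = \mathsf{II} \too \mathsf{QQ}$, and the two resources to be $R_1 = \rho$ and $R_2 = \sigma$. The only substantive ingredient to supply is the type ordering $T_{\rm tel} \succeq_{\rm type} T_\rho$; once this is established, Theorem~\ref{proporder} immediately gives that the $\succeq_{\rm LOSR}$ and $\succeq_{\mathcal{G}_{\rm tel}}$ orderings coincide on quantum states, which is exactly the claimed biconditional.

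To establish $T_{\rm tel} \succeq_{\rm type} T_\rho$, I would invoke the invertible transformation constructed in the proof of Theorem~\ref{squniv} and depicted in Fig.~\ref{SQandBack}. Starting from an arbitrary bipartite state $\rho$ of type $\mathsf{II} \too \mathsf{QQ}$, I would apply the Bell-measurement transformation of Fig.~\ref{SQandBack}(a) to the first party's quantum output of dimension $d$ together with a fresh quantum input of the same dimension. This converts the first party's partition from $\mathsf{I} \too \mathsf{Q}$ into $\mathsf{Q} \too \mathsf{C}$ while leaving the second party's $\mathsf{I} \too \mathsf{Q}$ partition untouched, yielding a resource of global type exactly $\mathsf{QI} \too \mathsf{CQ} = T_{\rm tel}$. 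By the left-inverse argument of Fig.~\ref{SQandBack}(b)---feeding one half of a maximally entangled state into the Bell measurement and applying the teleportation correction---this transformation is LOSR-reversible, so the resulting teleportage lies in the same LOSR equivalence class as $\rho$, i.e.\ $\rho \LOSRinterconv \mathcal{E}_{\rm tel}$. Since this holds for every $\rho$, every LOSR equivalence class of quantum states contains a teleportage, which is precisely $T_{\rm tel} \succeq_{\rm type} T_\rho$.

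Combining these two steps completes the argument. I would note that the forward direction, $\rho \succeq_{\rm LOSR} \sigma \implies \rho \succeq_{\mathcal{G}_{\rm tel}} \sigma$, is self-evident, since any teleportation strategy generable from $\sigma$ is then also generable from $\rho$; the content lies entirely in the converse, which Theorem~\ref{proporder} delivers through Theorem~\ref{subsumestrat}. I do not expect a genuine obstacle here, as the corollary is a clean instance of the general machinery already in place. The one point requiring care is the bookkeeping of partition-types: one must check that the Bell-measurement transformation sends $\mathsf{I} \too \mathsf{Q}$ to $\mathsf{Q} \too \mathsf{C}$ on the correct single party, producing the teleportage type $\mathsf{QI} \too \mathsf{CQ}$ rather than a neighboring type, and that the dimension count ($d \mapsto$ quantum input of dimension $d$ with classical output of dimension $d^2$) is consistent with the claimed conversion.
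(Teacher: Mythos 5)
Your proposal is correct and follows exactly the paper's own route: establish $T_{\rm tel} \succeq_{\rm type} T_\rho$ via the invertible Bell-measurement transformation of Fig.~\ref{SQandBack} (applied to one party's $\mathsf{I} \too \mathsf{Q}$ partition, yielding a teleportage of type $\mathsf{QI} \too \mathsf{CQ}$ in the same LOSR equivalence class), then invoke Theorem~\ref{proporder} with $T = T_{\rm tel}$ and $T' = T_\rho$. Your type bookkeeping and your observation that the forward direction is trivial while the converse rests on Theorem~\ref{subsumestrat} match the paper's argument precisely.
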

\noindent Indeed, this theorem holds not just for quantum states, but for any resource type which is lower in the global order than teleportages, including, for example, steering assemblages.

Our Theorem~\ref{subsumestrat} can also be applied to teleportation games, yielding a result analogous to that in Ref.~\cite{steer}. That is, any resource which outperforms a second resource at all teleportation games can generate any specific strategy that the second can generate:
\begin{cor}
 $R \succeq_{\mathcal{G}_{\rm tel}} R'$ iff $R' \LOSRconv \mathcal{E}_{\rm tel} \implies R \LOSRconv \mathcal{E}_{\rm tel}$.
\end{cor}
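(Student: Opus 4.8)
The final statement to prove is the Corollary
\[
R \succeq_{\mathcal{G}_{\rm tel}} R' \iff \left( R' \LOSRconv \mathcal{E}_{\rm tel} \implies R \LOSRconv \mathcal{E}_{\rm tel} \right),
\]
and the plan is simply to recognize it as the specialization of Theorem~\ref{subsumestrat} to the case where the fixed target type $T$ is the teleportage type $T_{\rm tel} = \mathsf{QI} \too \mathsf{CQ}$. Theorem~\ref{subsumestrat} was proven for \emph{arbitrary} target type $T$ and arbitrary resources $R$, $R'$, so there is genuinely nothing new to establish here beyond instantiating the variables. First I would state that we take $T := T_{\rm tel}$ and $\mathcal{E}_T := \mathcal{E}_{\rm tel}$ in the notation of Theorem~\ref{subsumestrat}, so that $\mathcal{G}_{\! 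T} = \mathcal{G}_{\rm tel}$ and $\succeq_{\mathcal{G}_{\! T}} = \succeq_{\mathcal{G}_{\rm tel}}$ by Definition~\ref{orderwrtgame}.

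With that identification, both directions are immediate. For the forward direction, I would invoke the first paragraph of the proof of Theorem~\ref{subsumestrat}: if every teleportage freely generable from $R'$ is also freely generable from $R$, then $R$ can reproduce any strategy underlying any teleportation game that $R'$ can, so $\omega_{G_{\rm tel}}(R) \ge \omega_{G_{\rm tel}}(R')$ for all $G_{\rm tel} \in \mathcal{G}_{\rm tel}$, which is exactly $R \succeq_{\mathcal{G}_{\rm tel}} R'$. For the converse, I would invoke the separating-hyperplane argument from the second paragraph of that same proof, now run with an analyzer $Z$ appropriate to the teleportage type (tomographically complete measurements on the classical and quantum outputs of $\mathcal{E}_{\rm tel}$, and tomographically complete preparations on its quantum input): if $R' \LOSRconv \mathcal{E}_{\rm tel}$ but $R$ could not generate $\mathcal{E}_{\rm tel}$, the convex set $S(R)$ of correlations $R$ can produce under $Z$ would fail to contain $P_{Z \circ \mathcal{E}_{\rm tel}}$, and the separating hyperplane would furnish a payoff function on which $R'$ beats $R$, contradicting $R \succeq_{\mathcal{G}_{\rm tel}} R'$; tomographic completeness of $Z$ then forces the generated strategy to equal $\mathcal{E}_{\rm tel}$ exactly.

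Since the content is entirely inherited from Theorem~\ref{subsumestrat}, the honest way to present this is as a one-line corollary proof rather than a fresh derivation. I would write something like: this is Theorem~\ref{subsumestrat} specialized to $T = T_{\rm tel}$, noting only that the teleportage type admits a tomographically complete analyzer (which it does, being built from classical and finite-dimensional quantum wires), so that the convexity and separation arguments apply verbatim.

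The only point requiring any genuine checking---and hence the main obstacle, though a minor one---is that the proof of Theorem~\ref{subsumestrat} is not entirely type-agnostic in an invisible way: it relies on the existence of a \emph{tomographically complete} analyzer $Z$ for the target type $T$, and on the resulting correlation set $S(R)$ being \emph{convex} so that a separating hyperplane exists. For the teleportage type $\mathsf{QI} \too \mathsf{CQ}$ both facts hold---finite-dimensional quantum and classical systems always admit tomographically complete measurement and preparation sets, and convexity of $S(R)$ follows from the convexity of $\LOSR$ for conversions into a fixed type (as remarked in the discussion of Fig.~\ref{typechange})---so no obstruction actually arises. I would therefore devote the bulk of the (very short) writeup to confirming these two hypotheses for $T_{\rm tel}$ and then defer all remaining reasoning to Theorem~\ref{subsumestrat}.
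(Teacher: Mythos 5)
Your proposal is correct and takes essentially the same approach as the paper: the paper presents this corollary as an immediate specialization of Theorem~\ref{subsumestrat} to the teleportage type $T_{\rm tel}$, with no further argument given. Your additional check that the teleportage type admits a tomographically complete analyzer and a convex correlation set $S(R)$ is left implicit in the paper but is a sound confirmation that the hypotheses of Theorem~\ref{subsumestrat} indeed apply.
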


\section{Open questions}

Our framework suggests a great deal of open questions for future study, two important examples of which were highlighted above.

Ideally, one would have type-independent methods for characterizing nonclassicality in practice. We begin developing such a toolbox in Ref.~\cite{rosset2019characterizing}. 

For each of the fifteen bipartite global types mentioned above, it is interesting to study the basic features of the (type-specific) LOSR resource theory. While this has been done for boxes, little attention has been given to this problem in other cases, even for quantum states.

Such features include the geometry of the free set of resources, the LOSR preorder, useful monotones and witnesses, and so on.  Ultimately, we advocate not just for these type-specific investigations, but for research in the type-independent context.

Part of the project of characterizing the preorder will be to characterize the sense in which there exist inequivalent kinds of nonclassicality.
At the top of the preorder, the situation for bipartite LOCC-entanglement is quite simple: there is a single maximally entangled state of a given dimension, from which all other states can be obtained by LOCC transformations.
This is no longer the case for multipartite LOCC-entanglement~\cite{Dur2000twoways}, nor for LOSR-entanglement even in the bipartite case~\cite{LOSRvsLOCCentang}. For resources beyond quantum states and for more parties, the situation gets even more complex. As an example, our work implies that there exist semiquantum channels in the equivalence class of Werner states, and semiquantum channels in the equivalence class of nonlocal boxes, and that these semiquantum channels exhibit inequivalent forms of nonclassicality.
\begin{customopq}{3}
What are the key features of the type-independent preorder over LOSR resources? What inequivalent forms of nonclassicality do these resources exhibit?
\end{customopq}

If one was interested only in {\em witnessing} nonclassicality as opposed to {\em quantifying} it, one could consider a preorder over types defined by a less restrictive condition, where type $T$ is above type $T'$ if every nonfree resource of type $T'$ could freely generate at least one nonfree resource of type $T$. All the known results in Table~\ref{typeorderings} hold for this definition as well; however, the two definitions might yield different answers for the open questions that remain.

One could also consider modifying our Definition~\ref{typeorder} such that local operations were taken to be free rather than local operations and shared randomness. Note that the operations required in the proof of Theorem~\ref{SQandBack} do not make use of any shared randomness, and so the theorem would still hold. In fact, one can readily verify that all the orderings in Figure~\ref{typeorderings} would continue to hold. However, Theorem~\ref{subsumestrat} requires convexity (through its use of the separating hyperplane theorem), as do Theorem~\ref{proporder} and Corollary~\ref{Buscgen} (since they rely on Theorem~\ref{subsumestrat}). 

If one were to modify Definition~\ref{typeorder} so that local operations and classical communication were free, the situation is less clear, as one would presumably need to widen the scope of applicability to signaling resources.
\begin{customopq}{4}
What can be learned by considering a type-independent framework of LOCC nonclassicality? 
\end{customopq}
This would be the relevant resource theory, for example, for distributed parties who share quantum memories and the ability to communicate classically.

Our framework has focused on the divide between classical and quantum resources. One can also study the divide between quantum and post-quantum resources, as we do in Ref.~\cite{postquantumschmid}.

A final open question regards the relationship between our work and self-testing~\cite{Mayers2004,Montanaro2016,Supic2019a}.
In self-testing, correlations (e.g. of type $\mathsf{CC} \too \mathsf{CC}$) certify the existence of an underlying valuable quantum resource (say $\mathsf{II} \too \mathsf{QQ}$).
For example, the quantum correlations violating the CHSH inequality~\cite{CHSH} maximally are a signature of an underlying quantum state that is at least as nonclassical as a singlet state (see~\cite{Supic2019a} for a pedagogical derivation).
Recently, the self-testing line of research has expanded beyond self-testing of states, and now has also been applied to steering assemblages~\cite{Supic2016a,Gheorghiu2015}, entangled measurements~\cite{Bancal2018a,Renou2018}, prepare-and-measure scenarios~\cite{Tavakoli2018c}, and quantum gates~\cite{Sekatski2018a}.
However, the correlations that are a signature of the given resource cannot be converted back to that quantum state, and so are not in the same LOSR equivalence class. Rather, they merely allow one to {\em infer the prior existence} of the self-tested resource. As such, the precise relationship with our work is left for exploration.

In the present work, we did not consider the Hilbert space dimensions as part of the resource type.
One could consider a more fine-grained study of conversions between resources of different sizes.
For example, the notion of nonclassical dimension for bipartite quantum states is encoded by the Schmidt rank~\cite{Sperling2011}.
We leave as an open question the generalization of this notion to other resource types; note that Ref.~\cite{rosset2019characterizing} includes a discussion of Hilbert space dimensions solely for the purposes of implementing numerical algorithms.

As a final remark, we recall that the semiquantum games introduced in~\cite{Buscemi2012LOSR} to test bipartite states in a measurement-device independent fashion~\cite{Branciard2013}, can be transformed into guessing games suitable for testing, always in a measurement-device independent fashion, quantum channels and quantum memories~\cite{Buscemi-ProbInfTrans-2016,Rosset-Buscemi-Liang-PRX}. More generally, such single-party guessing games have found application in the context of measurement resources~\cite{Buscemi-ProbInfTrans-2016,Skr-Linden-2019,Skr-Supic-Cavalcanti-2019} and general convex resource theories~\cite{Takagi-etal-2019,Uola-etal-2019,uola2019quantification,Takagi-Regula-PRX}. We leave further investigations about relations between these works and ours for future research.

\section{Conclusions}

We have presented a resource-theoretic framework which unifies various types of resources of nonclassicality which arise when multiple parties have access to classical common causes but no cause-effect relations. This type-independent resource theory allows us to compare the LOSR nonclassicality of resources of arbitrary types and to quantify them using games of arbitrary types. 
We then derived several theorems which ultimately can be used to simplify the methods by which one characterizes the nonclassicality of resources.
Our theorems additionally generalize, unify, and simplify the seminal results of Refs.~\cite{sq,steer,telep}, and our framework leads to a number of exciting questions for future work. 

\section{Acknowledgments}

The authors acknowledge useful discussions with T.C.~Fraser, T.~Gonda, R.W.~Spekkens, and E.~Wolfe and helpful feedback on the manuscript from M.~Hoban, M.~Hall and A.B.~Sainz. D.S. is supported by a Vanier Canada Graduate Scholarship. 
F.B. is grateful for the hospitality of Perimeter Institute where part of this work was carried out and acknowledges partial support from the Japan Society for the Promotion of Science (JSPS) KAKENHI, Grant No.20K03746.
Research at Perimeter Institute is supported in part by the Government of Canada through the Department of Innovation, Science and Economic Development Canada and by the Province of Ontario through the Ministry of Colleges and Universities.
This publication was made possible through the support of a grant from the John Templeton Foundation. The opinions expressed in this publication are those of the authors and do not necessarily reflect the views of the John Templeton Foundation.

\setlength{\bibsep}{2pt plus 1pt minus 2pt}
\bibliographystyle{apsrev4-1}
\nocite{apsrev41Control}
\bibliography{bib}

\onecolumngrid
\bigskip

\begin{center}
\line(1,0){250}
\end{center}

\bigskip

\twocolumngrid

\onecolumngrid
\clearpage


\end{document}